\begin{document}
\title{Using Standard Typing Algorithms Incrementally}
\titlerunning{Using Standard Typing Algorithms Incrementally}

\ifdefined\ANON
    \author{\textit{Anonymous Author(s)}}
    \institute{\textit{Unknown Affiliation(s)}}
\else
    \author{
        Matteo Busi\inst{1} \and 
        Pierpaolo Degano\inst{1} \and 
        Letterio Galletta\inst{2} 
    }
    \authorrunning{Matteo Busi et al.}
    \titlerunning{Using Standard Typing Algorithms Incrementally}
    \institute{Dipartimento di Informatica, Universit\`a di Pisa, Pisa, Italy \\ \email{\{matteo.busi,degano\}@di.unipi.it}
        \and
        IMT School for Advanced Studies, Lucca, Italy \\ \email{letterio.galletta@imtlucca.it}
    }
\fi

\maketitle
\begin{abstract}
Modern languages are equipped with static type checking/inference that helps programmers to keep a clean programming style and to reduce errors.
However, the ever-growing size of programs and their continuous evolution require building fast and efficient analysers.
A promising solution is incrementality, so one only re-types those parts of the program that are new, rather than the entire codebase.
We propose an algorithmic schema driving the definition of an incremental typing algorithm that exploits the existing, standard ones with no changes.
Ours is a \emph{grey-box} approach, meaning that just the shape of the input, that of the results and some domain-specific knowledge are needed to instantiate our schema.
Here, we present the foundations of our approach and we show it at work to derive three different incremental typing algorithms.
The first two implement type checking and inference for a functional language.
The last one type-checks an imperative language to detect information flow and non-interference.
We assessed our proposal on a prototypical implementation of an incremental type checker.
Our experiments show that using the type checker incrementally is (almost) always rewarding.
\end{abstract}

\section{Introduction}\label{sec:intro}
%
Most of the modern programming languages are equipped with mechanisms
for checking or inferring types.
Such static analyses prescribe programmers a clean programming style
and help them to reduce errors.
The ever-growing size of programs requires building fast and efficient analyzers.  
This quest becomes even more demanding because many companies are recently adopting
development methodologies that advocate a continuous evolution of
software, e.g.\ \emph{perpetual development model}~\cite{CalcagnoDDGHLOP15}.
In such a model a shared code base is altered by many programmers submitting small code modifications (\emph{diffs}).
Consequently, defining static analyses and verification algorithms that require an amount of work on the size of the \emph{diffs} instead of the whole code base becomes a crucial problem, as recently observed by~\cite{facebook-harman}.

Just as software systems grow and change incrementally, also typing
should be done incrementally, so as to only re-type those parts that
are new or modified, rather than the entire codebase.
The literature reports on some techniques, briefly surveyed below,
which introduce \emph{new} typing algorithms that work incrementally.
Instead, we propose a method that \emph{makes incremental} an \emph{existing}
typing algorithm by reusing work already done, using caching and memoization.
An advantage of our proposal is that it consists of an algorithmic schema \emph{independent} of any
specific language and type system.

Roughly, our schema works as follows.
We start from the abstract syntax tree of the program, where each node 
is annotated with the result $R$ provided by the original
typing algorithm $\mathcal A$.
%
We build then a cache, containing for each subterm $t$ the
result $R$ and other relevant contextual information needed by $\mathcal A$
to type $t$ (typically a typing environment binding the free variables of $t$).
When the program changes, its annotated abstract syntax tree changes
accordingly and typing the subterm associated with the changed node
is done incrementally, by reusing the results in the cache whenever
possible and by suitably invoking $\mathcal A$ upon need.
%
Clearly, the more local the changes, the more information is reused.

Technically, our proposal consists of a set of rule schemata that
drive the usage of the cache and of the original algorithm
$\mathcal A$, as sketched above.
Actually, the user has to define the shape of caches and to
instantiate a well-confined part of the rule schemata.
If the instantiation meets an easy-to-check criterion, the typing results of $\mathcal A$ and of the incremental algorithm are guaranteed to be coherent, i.e.\ the incremental algorithm behaves as the non-incremental one.
All the above provides us with the guidelines to develop a framework that makes incremental the usage of a given typing algorithm.

Summing up, the main contributions of this paper include:
\begin{itemize}
    \item a parametric, language-independent algorithmic schema that uses an existing typing algorithm $\mathcal A$ incrementally;
    \item a formalisation of the steps that instantiate the schema and yield the incremental version of $\mathcal A$:
    %
    the resulting typing algorithm only types the \emph{diffs} and those parts of the code affected by them;
    \item a characterisation of the rule format of standard typing algorithms in terms of two functions $\tr$ and $\echeckjoin$;
    \item a theorem that under a mild condition guarantees the coherence of results between the original algorithm and its incremental version;
    \item the instantiation of the schema for two type checking and one type inference algorithm for a functional and an imperative language; 
    \item a prototype of the incremental version of the type checker for MinCaml~\cite{MinCaml}, showing that implementing the schema is doable;\footnote{Available at \url{https://github.com/mcaos/incremental-mincaml}} and
    \item 
    experimental results showing that the cost of using the type checker incrementally depends on the size of \emph{diffs}, and its performance increases as these become smaller.
\end{itemize}

\paragraph{Related work.}
To the best of our knowledge, the literature
has some proposals for incrementally typing programs.
However, these approaches heavily differ from ours, because all of them propose a \emph{new} incremental algorithm for type checking, while we are using \emph{existing} algorithms.
Additionally, none of the approaches surveyed below use a uniform characterisation of type judgements as we do through the metafunctions $\tr$ and $\echeckjoin$.


Meertens~\cite{meertens1983incremental} proposes an incremental type checking algorithm for the language \texttt{B}.
Johnson and Walz~\cite{johnson1986maximum} treat incremental type inference, focussing on identifying where type errors precisely arise.
Aditya and Nikhil~\cite{aditya1991incremental} propose an incremental Hindley/Milner type system supporting incremental type checking of top-level definitions.
Our approach instead supports incremental type-checking for all kinds of expressions, not only the top-level ones.
Miao and Siek~\cite{miao2010incremental} introduce an incremental type checker leveraging the fact that, in multi-staged programming, programs are successively refined.
Wachsmuth et al.~\cite{wachsmuth2013language} propose a task engine for type checking and name resolution: when a file is modified a task is generated and existing (cached) results are re-used where possible.
The proposal by Erdweg et al.~\cite{erdweg2015cocontextual} is the most similar to ours.
Given a type checking algorithm they describe how to obtain a \emph{new} incremental algorithm.
As in our case, they decorate an abstract syntax tree with types and typing environments, represented as sets of constraints, to be suitably propagated when typing.
In this way there is no need of dealing with top-down context propagation while types flow bottom-up.
%
%
Recently, Facebook released Pyre~\cite{facebook2018pyre} a scalable and incremental type checker for \texttt{Python}, designed to help developers of large projects.

Incrementality has also been studied for static analysis other than typing.
IncA~\cite{szabo2016inca} is a domain-specific language for the definition of incremental program analyses, which represents dependencies among the nodes of the abstract syntax tree of the target program as a graph.
Infer~\cite{facebook2018infer} uses an approach similar to ours in which analysis results are cached to improve performance~\cite{blackshear2017finding}.
Ryder and Paull~\cite{ryder1988incremental} present two incremental update algorithms, {\footnotesize ACINCB} and {\footnotesize ACINCF}, that allow incremental data-flow analysis.
Yur et al.~\cite{yur1999incremental} propose an algorithm for an incremental \emph{points-to} analysis.
McPeak et al.~\cite{mcpeak2013scalable} describe a technique for incremental and parallel static analysis based on \emph{work units} (self-contained atoms of analysis input).
The solutions are computed by a sort of processes called \emph{analysis workers}, all coordinated by an \emph{analysis master}.
Also, there are papers that use memoization with a goal similar to the one of our cache, even if they consider different
analysis techniques.
%
In particular,
Mudduluru et al.\  propose, implement, and test an incremental analysis algorithm based on memoization of (equivalent) boolean formulas used to encode paths on programs~\cite{mudduluru2014efficient}.
Some other authors also apply memoization techniques to incremental model-checking~\cite{lauterburg2008incremental,yang2009regression} and incremental symbolic execution~\cite{yang2014directed,qiu2015compositional}.

\paragraph{Plan of the paper.}
The next section intuitively presents our proposal using a
simple example.
The formalisation of our algorithmic schema for incremental typing and
its characterisation in terms of the functions $\tr$ and $\echeckjoin$
are in Section~\ref{sec:foundations}.
Sections~\ref{sec:itc-fun} and~\ref{sec:iti-fun} derive incremental
type checking and inference algorithms for a functional language,
while Section~\ref{sec:itc-imp} presents an incremental version of type checking
non-interference for an imperative language.
Section~\ref{sec:exp} briefly discusses our implementation and shows some experimental results.
The last section concludes.
All the proofs of lemmata and theorems, and more experimental results are in the Appendix.


\section{An overview of the incremental schema}\label{sec:intro-example}
%
In this section we illustrate how the algorithmic schema we propose can incrementally type check a simple program using a \emph{standard} algorithm, say $\mathcal{A}$.
Suppose to have the classical factorial program
\[
    f \triangleq \fletrec{\mathit{fact}}{\ffun{}{(n : int)}{(\fifthenelse{\fop{n}{1}{\geq}}{\fop{n}{\fapp{\mathit{fact}}{(\fop{n}{1}{-}})}{*}}{1} : int)}}{\fapp{\mathit{fact}}{x}}
\]
%
If one applies $\mathit{fact}$ to a constant greater than 0, say 7, a straightforward optimization yields the following
%
\[
    f' \triangleq \fletrec{\mathit{fact}}{\ffun{}{(n : int)}{(\fifthenelse{\fop{n}{3}{\geq}}{\fop{n}{\fapp{\mathit{fact}}{(\fop{n}{1}{-}})}{*}}{n} : int)}}{\fapp{\mathit{fact}}{7}}
\]
Suppose to have the abstract syntax tree of $f$, whose nodes are annotated with types (call it \emph{aAST}).
Now, we want to type check the new expression $f'$, by re-using as much as possible the typing information of $f$, stored in its aAST.
We proceed as follows.
First, we build a \emph{cache} $C$ associating each subexpression with its type and the typing environment needed to obtain it.
Then we \emph{incrementally} use this information to decide which existing results in the cache can be re-used and which are to be recomputed for type checking $f'$.
This process is divided into four steps.
For the moment, we omit the last one that consists in proving the correctness of the resulting algorithm.
As we will discuss later, correctness is established by showing that a component of our construction (the predicate $\ecompatenv$ used below) meets a mild condition.

\paragraph{\stepone}
The cache is a set of triples that associate with each expression $e$ the typing environment needed to close its free variables, and its type.
For example, the function application $\fapp{\mathit{fact}}{7}$, sub-expression of $f$, has the following entry in the cache, recording that $\fapp{\mathit{fact}}{7}$ has type $int$ in the typing environment
$\{\mathit{fact} \mapsto int \rightarrow int\}$:
\[
    (\fapp{\mathit{fact}}{7}, \{\mathit{fact} \mapsto int \rightarrow int\}, int)
\]

\paragraph{\steptwo}
Given an aAST for an expression $e$, we visit it in a depth-first order and we cache the relevant triples for it and for its (sub-)expressions.
Consider the sub-expression 
$e = \fifthenelse{\fop{n}{1}{\geq}}{\fop{n}{\fapp{\mathit{fact}}{(\fop{n}{1}{-}})}{*}}{1}$.
The cache records the triple $(e, \Gamma, int)$, where $e$ has type $int$ and $\Gamma =\{\mathit{fact} \mapsto int \rightarrow int, n \mapsto int\}$ gives types to the free variables of $e$.
The entries for the sub-expressions of $e$ are in \tablename~\ref{tab:cache-fun-fact} that shows the whole cache for $f$.

\begin{table}[bt]
  \caption{Tabular representation of the cache $C$ for
    $f$.}\label{tab:cache-fun-fact}
  \begin{center}
  \resizebox{1\linewidth}{!}{
            \begin{tabular}{ccc}
            \toprule
            \textbf{Expression} & \textbf{Environment} & \textbf{Type} \\
            \midrule
            $f , \quad 1, \quad 7$ & $\emptyset$ & $int$\\
            $\ffun{}{(n : int)}{(\fifthenelse{\fop{n}{1}{\geq}}{\fop{n}{\fapp{\mathit{fact}}{(\fop{n}{1}{-}})}{*}}{n} : int)}$ & $\emptyset$ & $int \rightarrow int$\\
            $n, \quad \fop{n}{1}{-}$ & $[n \mapsto int]$ & $int$\\
            $\fop{n}{1}{\geq}$ & $[n \mapsto int]$ & $bool$\\
            $\fapp{\mathit{fact}}{7}$ & $[\mathit{fact} \mapsto int \rightarrow int]$ & $int$\\
            $\mathit{fact}$ & $[\mathit{fact} \mapsto int \rightarrow int]$ & $int \rightarrow int$\\
            $\fifthenelse{\fop{n}{1}{\geq}}{\fop{n}{\fapp{\mathit{fact}}{(\fop{n}{1}{-}})}{*}}{n},
            \quad \fop{n}{\fapp{\mathit{fact}}{(\fop{n}{1}{-}})}{*}, \quad \fapp{\mathit{fact}}{(\fop{n}{1}{-})} \mbox{\ \ }$
                        & $[\mathit{fact} \mapsto int \rightarrow int, n \mapsto int]$ & $int$\\

           \bottomrule
            \end{tabular}
          }
     \end{center}
\end{table}

\paragraph{\stepthree}
A given typing algorithm $\mathcal{A}$ is used to build the incremental algorithm $\mathcal{IA}$ by following the specification given by the judgement below.
A judgement inputs an environment $\Gamma$, a cache $C$ and an expression $e$ and it computes incrementally the type $\tau$ and $C'$, with possibly updated cache entries for the sub-expressions of $e$:
%
\[
    \Gamma, C \vdash_\mathcal{IA} e : \tau \triangleright C'
\]
%
%
%
%
The incremental algorithm is expressed as a set of inductively defined rules.
Most of these simply mimic the structure of the rules defining $\mathcal{A}$.
Those for the expressions that introduce binders require instead a specific treatment of the environment and the cache.
Consider the two rules below for functional abstraction.
The first rule says that we can reuse the information available if the abstraction is cached and the environments $\Gamma$ and $\Gamma'$ coincide on the free variables of $e$ (checked by the predicate $\compatenv{\Gamma}{\Gamma'}{e}$)
\begin{align*}
\inferrule
{
    C(\ffun{}{(x : \tau_x)}{(e : \tau_e)}) = \langle \Gamma', \tau \rangle \ \ \compatenv{\Gamma}{\Gamma'}{\ffun{}{(x : \tau_x)}{(e : \tau_e)}}
}
{
    \Gamma, C \vdash_\mathcal{IA} \ffun{}{(x : \tau_x)}{(e : \tau_e)}  : \tau \triangleright C
}
\end{align*}
The second rule is for when nothing is cached (the side condition $\mathit{miss}$ holds), or the typing environments are not compatible.
In this case, $\Gamma$ is extended with the type of the argument $x$ to re-type $e$, and obtain $C''$, the update of $C$.
\begin{align*}
\inferrule*[right={$\mathit{miss}(C, e, \Gamma)$}]
{
    \Gamma[x \mapsto \tau_x], C \vdash_\mathcal{IA} e  : \tau_e \triangleright C'' \\
    C' = C'' \cup \{ (\ffun{}{(x : \tau_x)}{(e : \tau_e)}, \restrict{\Gamma}{FV(\ffun{}{(x : \tau_x)}{(e : \tau_e)}}, \tau_x \rightarrow \tau_e) \}
}
{
    \Gamma, C \vdash_\mathcal{IA} \ffun{}{(x : \tau_x)}{(e : \tau_e)}  : \tau_x \rightarrow \tau_e \triangleright C'
}
\end{align*}
%
%
Back to the example, the ``incremental'' deduction in~\figurename~\ref{fig:modified-fun-fact} suffices to type $f'$.
Note that one avoids re-checking the types of some sub-terms, e.g.\ of $\fop{n}{\fapp{\mathit{fact}}{(\fop{n}{1}{-}})}{*}$ in the proof tree of \figurename~\ref{prooftree:b}.
%
%
\begin{figure}[!bt]
    \scriptsize
    \centering
    \begin{subfigure}[b]{\textwidth}
    \centering
        \begin{prooftree}
            \hypo{C(n) = \langle \{n \mapsto int\}, int \rangle} \infer1{\Gamma, C \vdash_\mathcal{IA} n : int \triangleright C}
            \hypo{ \Gamma \vdash 3 : int} \infer1{\Gamma, C \vdash_\mathcal{IA} 3 : int \triangleright C' = C \cup \{ (3, \emptyset, int) \}}
            \infer2{\Gamma, C \vdash_\mathcal{IA} \fop{n}{3}{\geq} : bool \triangleright C'' = C' \cup \{ (\fop{n}{3}{\geq}, \{n \mapsto int\}, bool) \}}
        \end{prooftree}
    \caption{The proof tree $A$. where $\Gamma = \{n \mapsto int, \mathit{fact} \mapsto int \rightarrow int\}$}
    \end{subfigure}
    \hfill
    \bigskip

    \begin{subfigure}[b]{\textwidth}
        \begin{prooftree}
        \hypo{A}
        \hypo{C(\fop{n}{\fapp{\mathit{fact}}{(\fop{n}{1}{-}})}{*}) = \langle \{n \mapsto int, \mathit{fact} \mapsto int \rightarrow int\}, int\rangle}
        \infer1{\{n \mapsto int, \mathit{fact} \mapsto int \rightarrow int\}, C \vdash_\mathcal{IA} \fop{n}{\fapp{\mathit{fact}}{(\fop{n}{1}{-}})}{*} \triangleright C }
        \hypo{C(n) = \langle \{n \mapsto int\}, int\rangle} \infer1{\{n \mapsto int\}, C \vdash_\mathcal{IA} n \triangleright C }
        \infer3{\{n \mapsto int, \mathit{fact} \mapsto int \rightarrow int\}, C \vdash_\mathcal{IA} e_\mathit{if} : int \triangleright C''' = C'' \cup \{ (e_\mathit{if}, \{n \mapsto int, \mathit{fact} \mapsto int \rightarrow int\}, int) \}}
    \end{prooftree}
    \caption{The proof tree $B$.}\label{prooftree:b}
    \end{subfigure}
    \hfill
    \bigskip

    \begin{subfigure}[b]{\textwidth}
        \begin{prooftree}
        \hypo{B}
        \hypo{C^{iv} = C''' \cup \{ (\ffun{}{(n : int)}{(e_\mathit{if} : \tau_\mathit{if})}, \emptyset, int) \}}
        \infer2{\{\mathit{fact} \mapsto int \rightarrow int\}, C \vdash_\mathcal{IA} \ffun{}{(n : int)}{(e_\mathit{if} : \tau_\mathit{if})} : int \rightarrow int \triangleright C^{iv}}
        \hypo{C(\fapp{\mathit{fact}}{7}) = \langle \{\mathit{fact} \mapsto int \rightarrow int\}, int\rangle} \infer1{\{\mathit{fact} \mapsto int \rightarrow int\}, C \vdash_\mathcal{IA} \fapp{\mathit{fact}}{7} : int \triangleright C}
        \infer2{\emptyset, C \vdash_\mathcal{IA} f' : int \,\triangleright C^{iv} \cup \{ (f', \emptyset, int)\}}
    \end{prooftree}
    \caption{The proof tree for $\mathit{f'}$.}
    \end{subfigure}
\caption{Incremental typing of $f'$, where
\mbox{$e_\mathit{if} \triangleq \fifthenelse{\fop{n}{3}{\geq}}{\fop{n}{\fapp{\mathit{fact}}{(\fop{n}{1}{-}})}{*}}{n}$.}}\label{fig:modified-fun-fact}
\end{figure}
%



\section{Formalizing the incremental schema}\label{sec:foundations}
%
Here we formalise our algorithmic schema for incremental typing, exemplified in Section~\ref{sec:intro-example}.
Remarkably, it is independent of both the specific type system and the programming language (for that we use below $t \in Term$ to denote an expression or a statement).

Assume variables $x, y, \ldots \in Var$, types $\tau, \tau', \ldots \in Type$, and typing environments $\Gamma \colon Var \rightarrow Type \in Env$.
Also, assume that the original typing algorithm $\mathcal{A}$ is syntax-directed; that it is invoked by writing $\Gamma \vdash_\mathcal{A} t \colon R$, where $R \in Res$ is the result (not necessarily a type only); and that it is defined through inference rules.

Below we express the rules of $\mathcal{A}$ according to the following format.
It is convenient to order the subterms of $t$, by stipulating $i \leq j$ provided that $t_j$ requires the result of $t_i$ to be typed ($i,j \leq n_t$).
\begin{mathpar}
\inferrule{
    \forall i \in \mathbb{I}_t \,.\, \tr^{t}_{t_i}(\Gamma, \{R_j \}_{j < i \,\wedge\, j \in \mathbb{I}_t}) \vdash_\mathcal{A} t_i : R_i \\
    \echeckjoin_t(\Gamma, \{ R_i \}_{i \in \mathbb{I}_t}, \mathtt{out}\,R)\\
}{\Gamma \vdash_\mathcal{A} t : R}
\end{mathpar}
where $\mathbb{I}_t \subseteq \{ 1, \ldots, n_t\}$.
The function $\tr^{t}_{t_i}$ maps $\Gamma$ and a set of typing results into the typing environment needed by $t_i$.
The (conjunction of) predicate(s) $\echeckjoin_t$ checks that the subterms have compatible results $R_i$ and combines them in the overall result $R$.
(Both $\tr$ and $\echeckjoin$ are easily defined when typing rules in the usual format are rendered in the format above.)

For example the standard typing rule for variables:\footnote{
%
%
\hspace{-1mm}Instead with the axiom $\Gamma'[x \mapsto \tau] \vdash_\mathcal{A} x : \tau$ one has
$\mathbb{I}_x = \emptyset$ and the same $\echeckjoin_x$, where $\Gamma = \Gamma'[x \mapsto \tau]$.
}
%
\[
    \inferrule
    {
        x \in dom(\Gamma) \\
        \tau = \Gamma (x)
    }
    {
        \Gamma \vdash_\mathcal{A} x : \tau
      }
\]
%
is rendered in our format as follows (note that $\mathbb{I}_x = \emptyset$ just as the function $\tr$)
%
\[
\inferrule
{
    \echeckjoin_x (\Gamma, \emptyset, \mathtt{out}\,\tau)
}
{
    \Gamma \vdash_\mathcal{A} x : \tau
}
\quad \text{where} \ 
\echeckjoin_x (\Gamma, \emptyset, \mathtt{out}\,\tau) \triangleq x \in dom(\Gamma) \, \land \, \tau = \Gamma(x)
\]
As a further example consider the rule for the expression $\flet{x}{e_2}{e_3}$ below
\begin{mathpar}
    \inferrule
    {
        \Gamma \vdash_\mathcal{A} e_2 : \tau_2\\
        \Gamma[x \mapsto \tau_2] \vdash_\mathcal{A} e_3 : \tau_3
    }
    {
        \Gamma \vdash_\mathcal{A} \flet{x}{e_2}{e_3} : \tau_3
    }
\end{mathpar}
%
that becomes as follows (we abuse the set notation, e.g.\ omitting $\emptyset$ or $\{$ and $\}$).
%
\begin{mathpar}
    \inferrule
    {
        \tr^{\flet{x}{e_2}{e_3}}_{e_2}(\Gamma, \emptyset) \vdash_\mathcal{A} e_2 : \tau_2\\
        \tr^{\flet{x}{e_2}{e_3}}_{e_3}(\Gamma, \tau_2) \vdash_\mathcal{A} e_3 : \tau_3\\
        \echeckjoin_{\flet{x}{e_2}{e_3}} (\Gamma, \tau_2, \tau_3, \mathtt{out}\,\tau)
    }
    {
        \Gamma \vdash_\mathcal{A} \flet{x}{e_2}{e_3} : \tau
    }
\end{mathpar}
%
%
Note that the definition of function $\tr$ is immediate; that we need the type of $e_2$ for typing $e_3$; and that the second parameter of $\tr^{\flet{x}{e_2}{e_3}}_{e_2}$ is empty, because we only need the enviroment to type $e_2$.
\begin{equation}\label{def:tr}
    \tr^{\flet{x}{e_2}{e_3}}_{e_2}(\Gamma, \emptyset) \triangleq  \Gamma \qquad \qquad
    \tr^{\flet{x}{e_2}{e_3}}_{e_3}(\Gamma, \tau) \triangleq  \Gamma[x \mapsto \tau]
\end{equation}
Also the following definition is immediate
\[
    \echeckjoin_{\flet{x}{e_2}{e_3}} (\Gamma, \tau_2, \tau_3, \mathtt{out}\,\tau) \triangleq  (\tau = \tau_3)
\]
%
To enhance readability, we will hereto highlight the occurrences of \hltr{\tr^{t}_{t'}} (red in the pdf) and \hlcheckjoin{\echeckjoin_t} (blue in the pdf).

\paragraph{\stepone}
The shape of the cache is crucial for re-using incrementally portions of the available typing results.
A cache associates the input data $t$ and $\Gamma$ with the result $R$, rendered by a set of triples $(t, \Gamma, R)$, as done in Section~\ref{sec:intro-example}.
More formally, the set of caches $C$ is defined as:
\[
    Cache = \wp(Terms \times Env \times Res)
\]
We write $C(t) = \langle \Gamma, R \rangle$ if the cache has an entry for $t$, and $C(t) = \bot$ otherwise.
%

\paragraph{\steptwo}
Given a term, we assume that the nodes of its abstract syntax tree (called \emph{annotated abstract syntax tree} or \emph{aAST}) are annotated with the result of the typing for the subterm they represent (written $t : R$, possibly $t:\bot$ if $t$ does not type).
Let $\mathbb{I}_t$, $\{ t_i \}_{i \in \mathbb{I}_t}$, and $\tr^{t}_{t_i}$ be as above, and  let $\restrict{\Gamma}{\mathit{FV}(t)}$ be the restriction of $\Gamma$ to the free variables of $t$.
Then the following procedure visits the aAST in a depth-first manner and builds the cache.
\begin{align*}
    \buildcache{(t : R)}{\Gamma} \, =  \,& \{ (t, \restrict{\Gamma}{\mathit{FV}(t)}, R) \} \, \cup 
    \\ &
    \bigcup_{i \in \mathbb{I}_t} \big( \buildcache{(t_i : R_i)}{\hltr{\tr^{t}_{t_i}(\Gamma, \{ R_j \}_{j < i \,\land\, j \in \mathbb{I}_t})}} \big)
\end{align*}
%
The following theorem ensures that each entry of a cache returned by $\ebuildcache$ represents correct typing information.
\begin{restatable}[Cache correctness]{theorem}{thmcachecorrectness}\label{thm:cachecorrectness}
  Let $C$ be a cache, then
  \[
    (t, \Gamma, R) \in C  \iff \Gamma \vdash_\mathcal{A} t : R
  \]
\end{restatable}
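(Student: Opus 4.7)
The plan is to prove both directions of the biconditional by structural induction on $t$, relying on the (implicit) hypothesis that $C$ was produced by $\ebuildcache$ applied to a well-annotated aAST whose root annotation comes from a valid derivation $\Gamma_0 \vdash_\mathcal{A} t_0 : R_0$. Because the rule schema is syntax-directed and uses $\tr$ and $\echeckjoin$ to reconstruct premise-environments and combine subresults, my inductive invariant will be: whenever $\buildcache{(t:R)}{\Gamma}$ is called with $\Gamma \vdash_\mathcal{A} t : R$, every triple it inserts represents a valid typing judgment.

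For the forward direction, $(t, \Gamma, R) \in C \implies \Gamma \vdash_\mathcal{A} t : R$, I would induct on the depth of the recursive call of $\ebuildcache$ that placed the triple into $C$. The base of the union contributes $(t, \restrict{\Gamma}{\mathit{FV}(t)}, R)$; its validity follows from the invariant $\Gamma \vdash_\mathcal{A} t : R$ together with the standard fact that derivability is preserved under restriction to free variables. For the inductive step, a triple inserted during the recursive call on some subterm $t_i$ is placed in the cache under environment $\tr^{t}_{t_i}(\Gamma, \{R_j\}_{j<i \land j \in \mathbb{I}_t})$; but this is exactly the premise-environment in the schema rule for $t$, so the corresponding premise $\tr^{t}_{t_i}(\Gamma, \{R_j\}) \vdash_\mathcal{A} t_i : R_i$ holds, re-establishing the invariant for the recursive call and, by induction, ensuring the inserted triple is a valid typing.

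For the converse direction, $\Gamma \vdash_\mathcal{A} t : R \implies (t, \Gamma, R) \in C$, the statement must be read with $t$ ranging over the subterms actually visited by $\ebuildcache$ (the cache is finite, so it cannot enumerate every derivable judgment in the universe). Under this reading, another structural induction works: the root triple is inserted by the first clause; and for each subterm $t_i$ of a visited $t$, the syntax-directedness of $\mathcal{A}$ together with the $\echeckjoin_t$ predicate forces the unique premise derivation to be the one with environment $\tr^{t}_{t_i}(\Gamma, \{R_j\})$, which is precisely the environment handed to the recursive call that will insert the corresponding triple.

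The main obstacle is stating the theorem precisely, since the biconditional as written is slightly ambiguous: the $\Leftarrow$ direction requires the implicit scope restriction to subterms reached by $\ebuildcache$, and the $\Rightarrow$ direction requires the implicit assumption that $C$ arises from $\ebuildcache$ at a valid annotated tree. Once these scoping issues are fixed, the remainder is an almost mechanical induction whose each step just matches the recursive union in the definition of $\ebuildcache$ against the premises of the schema rule $\inferrule{\tr^{t}_{t_i}(\Gamma, \{R_j\}) \vdash_\mathcal{A} t_i : R_i \quad \echeckjoin_t(\Gamma, \{R_i\}, \mathtt{out}\,R)}{\Gamma \vdash_\mathcal{A} t : R}$.
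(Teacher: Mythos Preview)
Your proposal is correct and follows essentially the same approach as the paper, which dispatches the theorem in a single sentence: ``The theorem easily follows by induction from the definition of $\ebuildcache$.'' Your write-up simply makes explicit the inductive invariant and the scoping assumptions (that $C$ arises from $\ebuildcache$ on a well-annotated aAST, and that the $\Leftarrow$ direction ranges over visited subterms) that the paper leaves implicit.
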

%

\paragraph{\stepthree}
The third step consists of instantiating the rule templates that make typing incremental.
We remark that no change to
the original algorithm $\mathcal{A}$ is needed: it is used as a \emph{grey-box} --- what matters are just the shape of the original judgements, the rules and some domain-specific knowledge.
The judgements for the incremental typing algorithm $\mathcal{IA}$ have the form:
\[
\Gamma, C \vdash_\mathcal{IA} t : R \triangleright C'
\]

\noindent
We have three different rule templates defining the incremental typing algorithm.

The first template is for the case when there is a cache hit:
\begin{mathpar}
    \inferrule*
    {
        C(t) = \langle \Gamma', R \rangle \\
        \compatenv{\Gamma}{\Gamma'}{t}
    }
    {
        \Gamma, C \vdash_\mathcal{IA} t : R \triangleright C
    }
\end{mathpar}
where  $\compatenv{\Gamma}{\Gamma'}{t}$ is a predicate testing the compatibility of typing environments for the term $t$ and means that $\Gamma'$ includes the information represented by $\Gamma$ for $t$ and that they are compatible (see the example in Section~\ref{sec:intro-example}).
Note that this predicate must be defined for \emph{each} algorithm $\mathcal{A}$ and, as discussed below, it must meet a mild requirement to make the algorithm $\mathcal{IA}$ coherent with $\mathcal{A}$.

The second rule template is for when there is a cache miss and the term in hand has no subterms:
\begin{mathpar}
    \inferrule*[right={$\mathit{miss} (C, t, \Gamma)$}]
    {
        \Gamma \vdash_\mathcal{A} t : R\\
        C' = C \cup \{ (t, \restrict{\Gamma}{\mathit{FV}{(t)}}, R) \}
    }
    { \Gamma, C \vdash_\mathcal{IA} t : R \triangleright C'}
\end{mathpar}
where $\Gamma \vdash_\mathcal{A} t : R$ is the invocation to $\mathcal{A}$, and the predicate $\mathit{miss}$ is defined as
%
\[
    \mathit{miss} (C, t, \Gamma) \triangleq \nexists  \Gamma', R.\  \big( C(t) = \langle \Gamma', R \rangle \land \compatenv{\Gamma}{\Gamma'}{t} \big)
\]
Intuitively, this predicate means that either there is no association for $t$ in $C$, or if an association $(t, \Gamma', R)$ exists the typing environment $\Gamma'$ is not compatible with the current $\Gamma$.

Finally, the last template applies when there is a cache miss, but the term $t$ is inductively defined starting from its subterms.
In this case the rule invokes the incremental algorithm on the subterms, by composing the results available in the cache (if any):
\begin{mathpar}
    \inferrule*[right={$\mathit{miss} (C, t, \Gamma)$}]
    {
        \forall {i \in \mathbb{I}_t} \,.\, \hltr{tr_{t}^{t_i}(\Gamma, \{ R_j \}_{j < i \,\land\, j \in \mathbb{I}_t})}, C \vdash_\mathcal{IA} t_i : R_i \triangleright C^{i}\\
        \hlcheckjoin{\echeckjoin_t (\Gamma, \{ R_i \}_{i \in \mathbb{I}_t}, \mathtt{out}\,R)}\\
        C' = \{ (t, \restrict{\Gamma}{\mathit{FV}{(t)}}, R) \} \cup \bigcup_{i \in \mathbb{I}_t} C^{i}
    }
    { \Gamma, C \vdash_\mathcal{IA} t : R \triangleright C' }
\end{mathpar}

\paragraph{\stepfour}
The resulting algorithm $\mathcal{IA}$ preserves the correctness of the original one $\mathcal{A}$, provided that the rule templates above, and especially the predicate $\ecompatenv$ are carefully instantiated.

The following definition characterises when two environments are compatible, and it helps in proving that our incremental typing correctly implements the given non-incremental one.
\begin{definition}[Typing environment compatibility]\label{def:tec}
    A predicate $\ecompatenv$ \emph{expresses compatibility} iff
    \[
        \forall \, \Gamma, \Gamma', t \,.\,\compatenv{\Gamma}{\Gamma'}{t} \land \Gamma' \vdash_\mathcal{A} t : R \implies \Gamma \vdash_\mathcal{A} t : R
  \]
\end{definition}
\noindent
If the predicate $\ecompatenv$ expresses compatibility, then the incremental typing algorithm is concordant with the original one.
%
\begin{restatable}[Typing coherence]{theorem}{thmtypecoherence}\label{thm:typecoherence}
    If $\ecompatenv$ expresses compatibility,
 then for all terms $t$, caches $C$, typing environments $\Gamma$, and typing algorithm $\mathcal A$
    \[
        \Gamma \vdash_\mathcal{A} t : R \iff \Gamma, C \vdash_\mathcal{IA} t : R \triangleright C'.
    \]
\end{restatable}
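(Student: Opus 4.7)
The plan is to prove both directions by induction on the derivation (equivalently on the structure of $t$, exploiting that $\mathcal{A}$ is syntax-directed), performing a case split on which of the three templates applies to $\mathcal{IA}$. At every case, the bridge between the cache and the underlying algorithm is Theorem~\ref{thm:cachecorrectness}, and the bridge between different typing environments is the compatibility hypothesis of Definition~\ref{def:tec}. The statement, strictly read, should be understood as $\Gamma \vdash_\mathcal{A} t : R$ iff $\exists C'.\ \Gamma, C \vdash_\mathcal{IA} t : R \triangleright C'$ for the $(\Rightarrow)$ direction, and as a universal statement over any $C'$ for $(\Leftarrow)$.

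For the direction $(\Leftarrow)$, suppose $\Gamma, C \vdash_\mathcal{IA} t : R \triangleright C'$ and inspect the last rule used. If it is the cache-hit template, then $C(t) = \langle \Gamma', R\rangle$ and $\compatenv{\Gamma}{\Gamma'}{t}$; Theorem~\ref{thm:cachecorrectness} gives $\Gamma' \vdash_\mathcal{A} t : R$, and Definition~\ref{def:tec} then yields $\Gamma \vdash_\mathcal{A} t : R$. If it is the no-subterm miss template, then $\Gamma \vdash_\mathcal{A} t : R$ is an explicit premise. If it is the inductive miss template, then the premises give $\mathcal{IA}$-derivations for each $t_i$ under $\tr^{t}_{t_i}(\Gamma, \{R_j\}_{j<i})$, which by the induction hypothesis become $\mathcal{A}$-derivations; combined with $\echeckjoin_t(\Gamma, \{R_i\}_{i\in\mathbb{I}_t}, \mathtt{out}\,R)$ these exactly assemble into the single $\mathcal{A}$-rule in the format introduced in Section~\ref{sec:foundations}, giving $\Gamma \vdash_\mathcal{A} t : R$.

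For the direction $(\Rightarrow)$, suppose $\Gamma \vdash_\mathcal{A} t : R$. Case-split on the state of the cache at $t$. If there exist $\Gamma', R'$ with $C(t) = \langle \Gamma', R'\rangle$ and $\compatenv{\Gamma}{\Gamma'}{t}$, apply Theorem~\ref{thm:cachecorrectness} to get $\Gamma' \vdash_\mathcal{A} t : R'$ and the compatibility assumption to obtain $\Gamma \vdash_\mathcal{A} t : R'$; syntax-directedness (hence determinism) of $\mathcal{A}$ forces $R' = R$, so the cache-hit template fires and produces $\Gamma, C \vdash_\mathcal{IA} t : R \triangleright C$. Otherwise $\mathit{miss}(C,t,\Gamma)$ holds, and we split on whether $t$ has subterms: if not, the second template applies directly using the given $\mathcal{A}$-derivation; if yes, the $\mathcal{A}$-derivation provides, for each $i \in \mathbb{I}_t$, a sub-derivation $\tr^{t}_{t_i}(\Gamma, \{R_j\}_{j<i}) \vdash_\mathcal{A} t_i : R_i$, to which the induction hypothesis applies, yielding $\mathcal{IA}$-derivations for the $t_i$ with some caches $C^i$; the third template then composes them and $\echeckjoin_t$ to produce the required $\mathcal{IA}$-derivation.

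The main obstacle is the cache-hit case in the $(\Rightarrow)$ direction: the incremental algorithm, upon a hit, returns the cached $R'$ without re-invoking $\mathcal{A}$, while the hypothesis only tells us that $\mathcal{A}$ produces some $R$. Closing the gap $R = R'$ is precisely where Definition~\ref{def:tec} is essential, via Theorem~\ref{thm:cachecorrectness} and determinism of $\mathcal{A}$; without the compatibility condition the incremental result would be consistent with \emph{some} typing of $t$ but not necessarily with the one valid under the current $\Gamma$. The inductive miss case is mostly bookkeeping, the key observation being that the ordered indices in $\mathbb{I}_t$ make the premises of the $\mathcal{A}$-rule and of the $\mathcal{IA}$-rule match in a one-to-one fashion through $\tr$ and $\echeckjoin$.
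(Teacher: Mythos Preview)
Your proposal is correct and follows essentially the same approach as the paper: structural induction on $t$ with a case split on whether the cache hits or misses, using Theorem~\ref{thm:cachecorrectness} together with the compatibility hypothesis to handle the hit case. Your organization (splitting first on hit/miss, then on whether $t$ has subterms) is slightly cleaner than the paper's (which splits on subterms first), and you are more explicit than the paper about why the cached result $R'$ must equal the expected $R$ in the $(\Rightarrow)$ hit case, invoking determinism of $\mathcal{A}$; the paper's proof glosses over this point.
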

%
Remarkably,  the above theorem suffices to establish the correctness of the incremental algorithm $\mathcal{IA}$,
provided that the original algorithm $\mathcal{A}$ is such.


\section{Incremental type checking for a functional language}\label{sec:itc-fun}
%
In this section we instantiate our schema in order to use incrementally the type checking algorithm of a simply typed functional programming language, called \texttt{FUN}.
The syntax, the types and the semantics of \texttt{FUN} are standard, see e.g.~\cite{nielson1999principles}.
We only recall some relevant aspects of its syntax below.
\begin{align*}
\mathit{Val} \ni v &\Coloneqq c \mid \ffun{f}{(x : \tau_x)}{(e : \tau_e)}
            \qquad \qquad \qquad \qquad \mathrel{\mathtt{op}} \in \{ +, *, =, \leq\} \\
Expr \ni e &\Coloneqq v \mid x \mid \fop{e_1}{e_2}{op} \mid \fapp{e_1}{e_2} \mid \fifthenelse{e_1}{e_2}{e_3} \mid \flet{x}{e_2}{e_3}  \\
Type \ni \tau, \tau_x, \tau_e &\Coloneqq \mathtt{int} \mid \mathtt{bool} \mid \tau_1 \rightarrow \tau_2
\qquad \qquad \qquad
Env \ni \Gamma \Coloneqq \emptyset \mid \Gamma[x \mapsto \tau]
\end{align*}
where in the functional abstraction $f$ denotes the name of the (possibly) recursive function we are defining, with type $\tau_x \rightarrow \tau_e$.
%

Assume as given the type checking algorithm $\mathcal{F}$, defined by judgements
\[
    \Gamma \vdash_\mathcal{F} e : \tau
\]
We build the type checking algorithm $\mathcal{IF}$ that uses $\mathcal{F}$ incrementally by following the four steps detailed in Section~\ref{sec:foundations}.

\paragraph{\stepone}
Each entry in the cache is a triple $(e, \Gamma, \tau)$, hence
\[
    C \in Cache = \wp(Expr \times Env \times Type)
\]

\paragraph{\steptwo}
We build the cache by visiting the aAST and ``reconstructing'' the typing environment.
%
The function $\ebuildcache$ is in~\figurename~\ref{fig:fun-tc-buildcache}, where for brevity we have directly used the results of $\tr$ rather than writing the needed invocations.
Indeed, $\tr$ is the identity almost everywhere, except for \textbf{let-in} (see equation~(\ref{def:tr})) and for abstraction where it is $\hltr{\tr^{\ffun{f}{(x : \tau_x)}{(e : \tau_e)}}_{e} (\Gamma, \{\tau_x, \tau_f\}) = \Gamma[x \mapsto \tau_x, f \mapsto \tau_f]}$.
\begin{figure}[tb!]
  \scriptsize
\begin{align*}
    \buildcache{(c : \tau_c)}{\Gamma} &\triangleq \{ (c, \emptyset, \tau_c) \}\\
    \buildcache{(x : \tau_x)}{\Gamma} &\triangleq \{ (x, [x \mapsto \tau_x], \tau_x) \}\\
    \buildcache{(\ffun{f}{(x : \tau_x)}{(e : \tau_e)} : \tau_f)}{\Gamma} &\triangleq \{ (\ffun{f}{x : \tau_x}{e : \tau_e}, \restrict{\Gamma}{\mathit{FV}(\ffun{f}{(x : \tau_x)}{(e : \tau_e)})}, \tau_f) \} \\
    &\qquad\cup (\buildcache{(f : \tau_f)}{\hltr{\Gamma}})\\
    &\qquad\cup (\buildcache{(x : \tau_x)}{\hltr{\Gamma}}) \\
    &\qquad\cup (\buildcache{(e : \tau_e)}{\hltr{\Gamma[x \mapsto \tau_x, f \mapsto \tau_f]}})\\
    \buildcache{(\flet{x}{e_2}{e_3} : \tau_{let})}{\Gamma} &\triangleq \{ (\flet{x}{e_2}{e_3}, \restrict{\Gamma}{\mathit{FV}(\flet{x}{e_2}{e_3})}, \tau_{let}) \}\\
    &\qquad\cup (\buildcache{(x : \tau_x)}{\hltr{\Gamma}})\\
    &\qquad\cup (\buildcache{(e_2 : \tau_2)}{\hltr{\Gamma}})\\
    &\qquad\cup (\buildcache{(e_3 : \tau_3)}{\hltr{\Gamma[x \mapsto \tau_x]}})\\
    \buildcache{(\fop{e_1}{e_2}{op} : \tau_{op})}{\Gamma} &\triangleq \{ (\fop{e_1}{e_2}{op}, \restrict{\Gamma}{\mathit{FV}(\fop{e_1}{e_2}{op})}, \tau_{op}) \}\\
    &\qquad\cup(\buildcache{(e_1 : \tau_1)}{\hltr{\Gamma}}) \\
    &\qquad\cup (\buildcache{(e_2 : \tau_2)}{\hltr{\Gamma}})\\
    \buildcache{(\fapp{e_1}{e_2} : \tau_{app})}{\Gamma} &\triangleq \{ (\fapp{e_1}{e_2}, \restrict{\Gamma}{\mathit{FV}(\fapp{e_1}{e_2})}, \tau_{app}) \} \\
    &\qquad\cup (\buildcache{(e_1 : \tau_1)}{\hltr{\Gamma}}) \\
    &\qquad\cup(\buildcache{(e_2 : \tau_2)}{\hltr{\Gamma}})
\end{align*}

\caption{Definition of $\ebuildcache$  for the \texttt{FUN} language.}\label{fig:fun-tc-buildcache}
\end{figure}

\paragraph{\stepthree}
By instantiating the patterns of Section~\ref{sec:foundations} we obtain judgements of the form
%
\[
\Gamma, C \vdash_\mathcal{IF} e : \tau \triangleright C'
\]
meaning that the expression $e$ has type $\tau$ in the environment $\Gamma$ and using the cache $C$.
The cache $C'$ records new discoveries during the incremental typing.

The incremental rules are in \figurename~\ref{fig:fun-simple-types}.
Most of them are trivial as they mimic the behaviour of the original algorithm ${\mathcal F}$.
As done in \figurename~\ref{fig:fun-tc-buildcache}, we simply write the results of $\hltr{\tr}$ and of $\hlcheckjoin{\echeckjoin}$, rather than their invocations.
%
%
Consider for example the rule~\rulename{($\mathcal{IF}$-Let-Miss)}: first, the subexpressions $e_2$ and $e_3$ are incrementally type checked in the environments prescribed by the relevant calls to the function $\tr$ in equation~(\ref{def:tr}), i.e.\
$\hltr{\Gamma}$ and $\hltr{\Gamma[x \mapsto \tau_2]}$, respectively.
Then, the type of the whole expression \textbf{let-in} is computed by $\hlcheckjoin{\echeckjoin_{\flet{x}{e_2}{e_3}}(\Gamma, \{\tau_2, \tau_3\}, {\mathtt {out}}\, \tau)}$.
\begin{figure}[!tb]
\scriptsize
\begin{mathpar}
    \inferrule*[lab={($\mathcal{IF}$-Hit)}]%
    {
        C(e) = \langle \Gamma', \tau \rangle \\
        \compatenv{\Gamma}{\Gamma'}{e}
    }
    {
        \Gamma, C \vdash_\mathcal{IF} e : \tau \triangleright C
    }

    \inferrule*[lab={($\mathcal{IF}$-Const-Miss)},right={$\mathit{miss} (C, c, \Gamma)$}]%
    {
        \Gamma \vdash_\mathcal{F} c : \tau\\
        C' = C \cup \{ (c, \emptyset, \tau) \}
    }
    {
        \Gamma, C \vdash_\mathcal{IF} c : \tau \triangleright C'
    }

    \inferrule*[lab={($\mathcal{IF}$-Var-Miss)},right={$\mathit{miss} (C, x, \Gamma)$}]%
    {
        \Gamma \vdash_\mathcal{F} x : \tau\\
        C' = C \cup \{ (x, \restrict{\Gamma}{x}, \tau) \}
    }
    {
        \Gamma, C \vdash_\mathcal{IF} x : \tau \triangleright C'
    }

    \inferrule*[lab={($\mathcal{IF}$-Abs-Miss)},right={$\mathit{miss} (C, \ffun{f}{(x : \tau_x)}{(e : \tau_e)}, \Gamma)$}]%
    {
        \hltr{\Gamma[x \mapsto \tau_x, f \mapsto \tau_x \rightarrow \tau_e]}, C \vdash_\mathcal{IF} e  : \tau_{body} \triangleright C'' \\
        \hlcheckjoin{\tau_{body} = \tau_e \land \tau = \tau_x \rightarrow \tau_e} \\
        C' = C'' \cup \{ (\ffun{f}{x : \tau_x}{e : \tau_e}, \restrict{\Gamma}{\mathit{FV}(\ffun{f}{(x : \tau_x)}{(e : \tau_e)})}, \tau) \}
    }
    {
        \Gamma, C \vdash_\mathcal{IF} \ffun{f}{(x : \tau_x)}{(e : \tau_e)}  : \tau \triangleright C'
    }

    \inferrule*[lab={($\mathcal{IF}$-Op-Miss)},right={$\mathit{miss} (C, \fop{e_1}{e_2}{op}, \Gamma)$}]%
    {
        \hltr{\Gamma}, C \vdash_\mathcal{IF} e_1 : \tau_1 \triangleright C'' \\
        \hltr{\Gamma}, C \vdash_\mathcal{IF} e_2 : \tau_2 \triangleright C''' \\
        \hlcheckjoin{\tau_1 = \tau_2 \land \tau = \tau_1} \\
        C' = C'' \cup C''' \cup \{ (\fop{e_1}{e_2}{op}, \restrict{\Gamma}{\mathit{FV}(\fop{e_1}{e_2}{op})}, \tau) \}
    }
    {
        \Gamma, C \vdash_\mathcal{IF} \fop{e_1}{e_2}{op}  : \tau \triangleright C'
    }

    \inferrule*[lab={($\mathcal{IF}$-App-Miss)},right={$\mathit{miss} (C, \fapp{e_1}{e_2}, \Gamma)$}]%
    {
        \hltr{\Gamma}, C \vdash_\mathcal{IF} e_1 : \tau_x \rightarrow \tau_e \triangleright C'' \\
        \hltr{\Gamma}, C \vdash_\mathcal{IF} e_2 : \tau_2 \triangleright C''' \\
        \hlcheckjoin{\tau_x = \tau_2 \land \tau = \tau_e} \\
        C' = C'' \cup C''' \cup \{ (\fapp{e_1}{e_2}, \restrict{\Gamma}{\mathit{FV}(\fapp{e_1}{e_2})}, \tau) \}
    }
    {
        \Gamma, C \vdash_\mathcal{IF} \fapp{e_1}{e_2} : \tau \triangleright C'
    }

    \inferrule*[lab={($\mathcal{IF}$-If-Miss)},right={$\mathit{miss} (C, \fifthenelse{e_1}{e_2}{e_3}, \Gamma)$}]%
    {
        \hltr{\Gamma}, C \vdash_\mathcal{IF} e_1 : \tau_1 \triangleright C'' \\
        \hltr{\Gamma}, C \vdash_\mathcal{IF} e_2 : \tau_2 \triangleright C''' \\
        \hltr{\Gamma}, C \vdash_\mathcal{IF} e_3 : \tau_3 \triangleright C^{iv} \\
        \hlcheckjoin{\tau_2 = \tau_3 \land \tau_1 = bool \land \tau = \tau_2} \\
        C' = C'' \cup C''' \cup C^{iv} \cup \{ (\fifthenelse{e_1}{e_2}{e_3}, \restrict{\Gamma}{\mathit{FV}(\fifthenelse{e_1}{e_2}{e_3})}, \tau) \}
    }
    {
        \Gamma, C \vdash_\mathcal{IF} \fifthenelse{e_1}{e_2}{e_3} : \tau \triangleright C'
    }

    \inferrule*[lab={($\mathcal{IF}$-Let-Miss)},right={$\mathit{miss} (C, \flet{x}{e_2}{e_3}, \Gamma)$}]%
    {
        \hltr{\Gamma}, C \vdash_\mathcal{IF} e_2 : \tau_2 \triangleright C'' \\
        \hltr{\Gamma[x \mapsto \tau_2]}, C \vdash_\mathcal{IF} e_3 : \tau_3 \triangleright C''' \\
        \hlcheckjoin{\tau = \tau_3}\\
        C' = C'' \cup C''' \cup \{ (\flet{x}{e_2}{e_3}, \restrict{\Gamma}{\mathit{FV}(\flet{x}{e_2}{e_3})}, \tau) \}
    }
    {
        \Gamma, C \vdash_\mathcal{IF} \flet{x}{e_2}{e_3} : \tau \triangleright C'
    }
  \end{mathpar}
\begin{equation}\label{def:comp-fun-tc}
    \text{with}\quad \compatenv{\Gamma}{\Gamma'}{e} \triangleq dom(\Gamma) \supseteq \mathit{FV}(e) \land dom(\Gamma') \supseteq \mathit{FV}(e) \land \forall y \in \mathit{FV}(e) \,.\, \Gamma(y) = \Gamma'(y)
  \end{equation}
  \caption{Rules defining incremental algorithm $\mathcal{IF}$ to type check \texttt{FUN}.}\label{fig:fun-simple-types}
  \vspace{-3mm}
\end{figure}

\paragraph{\stepfour}
To prove that $\mathcal{IF}$ is coherent with $\mathcal{F}$, we first show that $\ecompatenv$ satisfies Definition~\ref{def:tec}.
\begin{restatable}{lemma}{lemmaIFcoherence}\label{lemma:IF-coherence}
The predicate
    $\ecompatenv$ of Eq.~(\ref{def:comp-fun-tc}) in \figurename~\ref{fig:fun-simple-types} expresses compatibility.
\end{restatable}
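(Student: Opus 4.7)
The plan is to prove the lemma by structural induction on the expression $e$ (equivalently, by induction on the derivation $\Gamma' \vdash_\mathcal{F} e : \tau$), establishing the usual coincidence/weakening property: if two environments agree on the free variables of $e$, then they type $e$ identically. The base cases are immediate: for a constant $c$, the typing does not consult the environment at all; for a variable $x$ we have $\mathit{FV}(x) = \{x\}$, so the hypothesis $\compatenv{\Gamma}{\Gamma'}{x}$ gives $\Gamma(x) = \Gamma'(x) = \tau$, hence $\Gamma \vdash_\mathcal{F} x : \tau$ follows.

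For the non-binding compound cases $\fop{e_1}{e_2}{op}$, $\fapp{e_1}{e_2}$ and $\fifthenelse{e_1}{e_2}{e_3}$, the argument is routine. Since $\mathit{FV}(e_i) \subseteq \mathit{FV}(e)$ for each immediate subterm $e_i$, the hypothesis $\compatenv{\Gamma}{\Gamma'}{e}$ immediately yields $\compatenv{\Gamma}{\Gamma'}{e_i}$; the inductive hypothesis applied to each subderivation, together with the side conditions on types propagated by $\echeckjoin$, reassembles the desired derivation $\Gamma \vdash_\mathcal{F} e : \tau$.

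The cases that require a little care are the binders, namely $\ffun{f}{(x : \tau_x)}{(e_0 : \tau_e)}$ and $\flet{x}{e_2}{e_3}$. For the abstraction, the body is typed in the extended environments $\Gamma[x \mapsto \tau_x, f \mapsto \tau_x \to \tau_e]$ and $\Gamma'[x \mapsto \tau_x, f \mapsto \tau_x \to \tau_e]$. Since $\mathit{FV}(e_0) \subseteq \mathit{FV}(\ffun{f}{(x : \tau_x)}{(e_0 : \tau_e)}) \cup \{x, f\}$ and both extended environments agree on $x$ and $f$ by construction (and on all other variables in $\mathit{FV}(e_0)$ by the hypothesis), the compatibility predicate still holds on $e_0$ and the IH applies. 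The case for $\flet{x}{e_2}{e_3}$ is analogous: $\mathit{FV}(e_2) \subseteq \mathit{FV}(\flet{x}{e_2}{e_3})$, while for $e_3$ the extension with $x \mapsto \tau_2$ on both sides again preserves compatibility, since $\mathit{FV}(e_3) \setminus \{x\} \subseteq \mathit{FV}(\flet{x}{e_2}{e_3})$.

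The only mildly delicate point, and the one I would write out carefully, is the bookkeeping for binders: checking that the extended environments satisfy $\compatenv{\cdot}{\cdot}{e_0}$ (resp.\ $\compatenv{\cdot}{\cdot}{e_3}$) given only the hypothesis on the outer term. This reduces to the set-theoretic observation about free variables stated above, so no genuine obstacle arises. Everything else is a mechanical rule-by-rule verification mirroring the definition of $\mathcal F$.
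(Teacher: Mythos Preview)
Your proposal is correct and takes essentially the same approach as the paper: the paper's proof simply observes that $\compatenv{\Gamma}{\Gamma'}{e}$ forces $\Gamma$ and $\Gamma'$ to coincide on $\mathit{FV}(e)$, and that typing depends only on the free variables of $e$. You have spelled out the structural induction that underlies this one-line appeal, including the binder cases, so your write-up is a more detailed version of the same argument rather than a different route.
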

%
%
The above lemma suffices to prove the following theorem, which is an instance of Theorem~\ref{thm:typecoherence}.
\begin{restatable}{theorem}{thmIFcoherence}\label{thmIFcoherence}
$ \quad
\forall \Gamma, C, e.\
\Gamma, C \vdash_\mathcal{F} e : \tau  \iff
\Gamma, C \vdash_\mathcal{IF} e : \tau \triangleright C'
$
\end{restatable}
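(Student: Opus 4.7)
The plan is to observe that the statement is a direct instance of Theorem~\ref{thm:typecoherence}, specialised to $\mathcal{A} := \mathcal{F}$ and $\mathcal{IA} := \mathcal{IF}$. Thus the entire argument reduces to verifying the single hypothesis of that theorem: that the compatibility predicate $\ecompatenv$ defined in Eq.~(\ref{def:comp-fun-tc}) expresses compatibility in the sense of Definition~\ref{def:tec}. This is exactly the content of Lemma~\ref{lemma:IF-coherence}, so the theorem follows as soon as that lemma is in hand.

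To prove the lemma, I would proceed by structural induction on the expression $e$, showing that whenever $\compatenv{\Gamma}{\Gamma'}{e}$ and $\Gamma' \vdash_\mathcal{F} e : \tau$ both hold, one also has $\Gamma \vdash_\mathcal{F} e : \tau$. The unifying insight is a classical \emph{free-variable restriction} property for $\mathcal{F}$: every lookup that the algorithm performs on the input environment concerns a variable that is free in the term being typed. Since $\ecompatenv$ guarantees that $\Gamma$ and $\Gamma'$ both dominate $\mathit{FV}(e)$ and agree pointwise on it, every such lookup on $\Gamma'$ can be mirrored on $\Gamma$ with the same result, and the $\echeckjoin$ side-conditions of each rule are unaffected since they involve only the results already produced.

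The base cases are immediate: for a constant $c$ no lookup occurs, and for a variable $x$ we have $x \in \mathit{FV}(x)$, so $\Gamma(x) = \Gamma'(x)$ gives the conclusion. For the non-binding constructs (binary operators, application, conditional), $\mathit{FV}$ of each subexpression is contained in $\mathit{FV}$ of the whole, hence $\compatenv{\Gamma}{\Gamma'}{e_i}$ is inherited from $\compatenv{\Gamma}{\Gamma'}{e}$ and the induction hypothesis applies to every premise; the results are then recombined through the unchanged $\echeckjoin$.

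The main obstacle, as is typical, is the treatment of the binding constructs $\mathbf{fun}$ and $\mathbf{let}$, where one must extend both environments with the same new bindings and re-establish compatibility for the body. For $\flet{x}{e_2}{e_3}$, after deriving $\tau_2$ for $e_2$ by induction, one needs $\compatenv{\Gamma[x \mapsto \tau_2]}{\Gamma'[x \mapsto \tau_2]}{e_3}$: variables in $\mathit{FV}(e_3) \setminus \{x\}$ also lie in $\mathit{FV}(\flet{x}{e_2}{e_3})$ so $\Gamma$ and $\Gamma'$ continue to agree on them, while $x$ is now mapped to the same $\tau_2$ in both extensions; the induction hypothesis then yields the body's typing. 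The $\mathbf{fun}$ case is analogous, extending with both $x \mapsto \tau_x$ and $f \mapsto \tau_x \rightarrow \tau_e$. Once this bookkeeping is carried out, the lemma closes and Theorem~\ref{thmIFcoherence} follows as the promised corollary.
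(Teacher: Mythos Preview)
Your proposal is correct and follows exactly the paper's route: the theorem is obtained as an instance of Theorem~\ref{thm:typecoherence} once Lemma~\ref{lemma:IF-coherence} is established, and your structural-induction argument for the lemma spells out in detail the ``free-variable restriction'' property that the paper merely invokes by saying the typing of $e$ depends only on $\restrict{\Gamma}{\mathit{FV}(e)}$. The only difference is one of presentation: the paper dismisses the lemma as trivial, whereas you carry out the induction explicitly, including the bookkeeping for the binding constructs.
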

%


\section{Incremental type inference for a functional language}\label{sec:iti-fun}
%
In this section we instantiate our schema to make incremental the type inference of \texttt{FUN}.
The syntax of the language is slightly modified to remove type annotations, while types are now augmented with type variables $\alpha, \beta, \ldots \in \mathit{TVar}$:
\begin{align*}
\mathit{Val} \ni v &\Coloneqq c \mid \ffun{f}{x}{e}
   \qquad \qquad \qquad \qquad \mathrel{\mathtt{op}} \in \{ +, *, =, \leq\} \\
\mathit{Expr} \ni e &\Coloneqq v \mid x \mid \fop{e_1}{e_2}{op} \mid \fapp{e_1}{e_2} \mid \fifthenelse{e_1}{e_2}{e_3} \mid \flet{x}{e_2}{e_3} \\
\mathit{AType} \ni \tau &\Coloneqq \mathtt{int} \mid \mathtt{bool} \mid \tau_1 \rightarrow \tau_2 \mid \alpha
\qquad \mathit{Env} \ni \Gamma \Coloneqq \emptyset \mid \Gamma[x \mapsto \tau]
\end{align*}
The judgements of the type inference algorithm $\mathcal{W}$ have the form
\[
    \Gamma \vdash_\mathcal{W} e \colon (\tau, \theta)
\]
where $\theta \colon (\mathit{TVar} \rightarrow \mathit{AType}) \in \mathit{Subst}$ is a substitution mapping type variables into augmented types.
As usual, we write $\theta\,\tau$ to indicate the application of the substitution $\theta$ to $\tau$, and  $\theta_2 \circ \theta_1$ stands for the composition of substitutions.

In~\figurename~\ref{fig:algo-w} we restate the inference algorithm $\mathcal W$ (see e.g.~\cite{nielson1999principles}), where we assume constants $c$ to have a fixed and known type, and $\mathcal{U}$ to be the standard type unification algorithm.
In the resulting set of rules we have coloured and framed the parts that drive the definitions
of $\hltr{\tr}$ and $\hlcheckjoin{\echeckjoin}$, so making clear that they occur unchanged in the
incremental inference algorithm $\mathcal{IW}$.
%
%
%
%
 \begin{figure}[tb]
   \scriptsize
     \begin{mathpar}
         \inferrule*[lab={($\mathcal{W}$-Const)}]%
         { }
         {
             \Gamma \vdash_\mathcal{W} c : (\tau_c, \mathit{id})
         }

         \inferrule*[lab={($\mathcal{W}$-Var)}]%
         { }
         {
             \Gamma \vdash_\mathcal{W} x : (\Gamma(x), \mathit{id})
         }

         \inferrule*[lab={($\mathcal{W}$-Abs)},right={$\alpha_x, \alpha_e \,\textit{fresh}$}]%
         {
             \hltr{\Gamma[x \mapsto \alpha_x, f \mapsto \alpha_x \rightarrow \alpha_e]} \vdash_\mathcal{W} e  : (\tau_e, \theta_e) \\
             \hlcheckjoin{
                 \theta_1 = \mathcal{U} (\tau_e, \theta_e \alpha_e)
             \land
             (\tau, \theta) = \big( (\theta_1\,(\theta_e\,\alpha_x)) \rightarrow (\theta_1\,\tau_e), \theta_1 \circ \theta_e \big)}
         }
         {
             \Gamma \vdash_\mathcal{W} \ffun{f}{x}{e} : (\tau, \theta)
         }

         \inferrule*[lab={($\mathcal{W}$-Op)},right={$\tau_{op}, \tau_{res} = \{ \mathtt{int}, \mathtt{bool}\}$}]%
         {
             \hltr{\Gamma} \vdash_\mathcal{W} e_1 : (\tau_1, \theta_1)\\
             \hltr{\theta_1\,\Gamma}\vdash_\mathcal{W} e_2 : (\tau_2, \theta_2)\\
             \hlcheckjoin{\theta_3 = \mathcal{U}(\theta_2\,\tau_1, \tau_{op}) \land  \theta_4 = \mathcal{U}(\theta_3\,\tau_2, \tau_{op}) \land (\tau, \theta) = (\tau_{res}, \theta_4 \circ \theta_3 \circ \theta_2 \circ \theta_1 )}
         }
         {
             \Gamma \vdash_\mathcal{W} \fop{e_1}{e_2}{op}  : (\tau, \theta)
         }

         \inferrule*[lab={($\mathcal{W}$-App)},right={$\alpha \,\textit{ fresh}$}]%
         {
             \hltr{\Gamma} \vdash_\mathcal{W} e_1 : (\tau_1, \theta_1) \rightarrow \tau_e \\
             \hltr{\theta_1\,\Gamma}\vdash_\mathcal{W} e_2 : (\tau_2, \theta_2)\\
             \hlcheckjoin{ \theta_3 = \mathcal{U} (\theta_2\,\tau_1, \tau_2 \rightarrow \alpha) \land (\tau, \theta)= (\theta_3\,\alpha, \theta_3 \circ \theta_2 \circ \theta_1)}
         }
         {
             \Gamma\vdash_\mathcal{W} \fapp{e_1}{e_2} : (\tau, \theta)
         }

         \inferrule*[lab={($\mathcal{W}$-If)}]%
         {
             \hltr{\Gamma}\vdash_\mathcal{W} e_1 : (\tau_1, \theta_1)\\
             \hltr{\theta_1\,\Gamma}\vdash_\mathcal{W} e_2 : (\tau_2, \theta_2)\\
             \hltr{\theta_2 (\theta_1\,\Gamma)}\vdash_\mathcal{W} e_3 : (\tau_3, \theta_3)\\
             \hlcheckjoin{\theta_4 = \mathcal{U}(\theta_3 (\theta_2\,\tau_1), bool) \land \theta_5 = \mathcal{U}(\theta_4\,\tau_3, \theta_4 (\theta_3\,\tau_1)) \land
             (\tau, \theta) = (\theta_5 (\theta_4\,\tau_3), \theta_5 \circ \theta_4 \circ \theta_3 \circ \theta_2)}
         }
         {
             \Gamma \vdash_\mathcal{W} \fifthenelse{e_1}{e_2}{e_3} : (\tau, \theta)
         }

         \inferrule*[lab={($\mathcal{W}$-Let)}]%
         {
             \hltr{\Gamma}\vdash_\mathcal{W} e_2 : (\tau_2, \theta_2)\\
             \hltr{(\theta_2\,\Gamma)[x \mapsto \tau_2]} \vdash_\mathcal{W} e_3 : (\tau_3, \theta_3) \\
             \hlcheckjoin{(\tau, \theta) = (\tau_3, \theta_3 \circ \theta_2)}
         }
         {
             \Gamma \vdash_\mathcal{W} \flet{x}{e_2}{e_3} : (\tau, \theta)
         }
     \end{mathpar}
     \caption{Rules defining algorithm $\mathcal{W}$ to infer \texttt{FUN} types.}\label{fig:algo-w}
 \end{figure}

\paragraph{\stepone}
Each entry in the cache is a triple $(e, \Gamma, (\tau, \theta))$, so a cache is
\[
    C \in Cache = \wp(Expr \times Env \times (AType \times Subst))
\]

\paragraph{\steptwo}
%
The function $\ebuildcache$ is easily defined in~\figurename~\ref{fig:fun-ti-buildcache} with the environments resulting from $\hltr{\tr}$ above.
\begin{figure}[htb!]
  \scriptsize
\begin{align*}
    \buildcache{(c : (\tau_c, \theta))}{\Gamma} &\triangleq \{ (c, \emptyset, (\tau_c, \theta)) \}\\
    \buildcache{(x : (\tau_x, \theta))}{\Gamma} &\triangleq \{ (x, [x \mapsto \tau_x], (\tau_x, \theta)) \}\\
    \buildcache{(\ffun{f}{x}{e} : (\tau_f, \theta_f))}{\Gamma} &\triangleq \{ (\ffun{f}{x}{e}, \restrict{\Gamma}{\mathit{FV}(\ffun{f}{x}{e})}, (\tau_f, \theta_f)) \} \\
    &\qquad\cup (\buildcache{(f : (\tau_f, \theta_f))}{\hltr{\Gamma}})\\
    &\qquad\cup (\buildcache{(x : (\tau_x, \theta_x))}{\hltr{\Gamma}}) \\
    &\qquad\cup (\buildcache{(e : (\tau_e, \theta_e))}{\hltr{\Gamma[x \mapsto \tau_x, f \mapsto \tau_f]}})\\
    \buildcache{(\flet{x}{e_2}{e_3} : (\tau_{let}, \theta_{let}))}{\Gamma} &\triangleq \{ (\flet{x}{e_2}{e_3}, \restrict{\Gamma}{\mathit{FV}(\flet{x}{e_2}{e_3})}, (\tau_{let}, \theta_{let})) \}\\
    &\qquad\cup (\buildcache{(x : (\tau_x, \theta_x))}{\hltr{\Gamma}})\\
    &\qquad\cup (\buildcache{(e_2 : (\tau_2, \theta_2))}{\hltr{\Gamma}})\\
    &\qquad\cup (\buildcache{(e_3 : (\tau_3, \theta_3))}{\hltr{\Gamma[x \mapsto \tau_x]}})\\
    \buildcache{(\fop{e_1}{e_2}{op} : (\tau_{op}, \theta_{op}))}{\Gamma} &\triangleq \{ (\fop{e_1}{e_2}{op}, \restrict{\Gamma}{\mathit{FV}(\fop{e_1}{e_2}{op})}, (\tau_{op}, \theta_{op})) \}\\
    &\qquad\cup(\buildcache{(e_1 : (\tau_1, \theta_1)}{\hltr{\Gamma}}) \\
    &\qquad\cup (\buildcache{(e_2 : (\tau_2, \theta_2)}{\hltr{\Gamma}})\\
    \buildcache{(\fapp{e_1}{e_2} : (\tau_{app}, \theta_{app}))}{\Gamma} &\triangleq \{ (\fapp{e_1}{e_2}, \restrict{\Gamma}{\mathit{FV}(\fapp{e_1}{e_2})}, (\tau_{app}, \theta_{app})) \} \\
    &\qquad\cup (\buildcache{(e_1 : (\tau_1, \theta_1))}{\hltr{\Gamma}}) \\
    &\qquad\cup(\buildcache{(e_2 : (\tau_2, \theta_2))}{\hltr{\Gamma}})
\end{align*}

\caption{Definition of $\ebuildcache$ for the incremental type inference of \texttt{FUN}.}\label{fig:fun-ti-buildcache}
\end{figure}

\paragraph{\stepthree}
In \figurename~\ref{fig:fun-simple-inference} we display the rules defining the algorithm $\mathcal{IW}$ with judgements of the following form
\[
\Gamma, C \vdash_\mathcal{IW} e : (\tau, \theta) \triangleright C'
\]
\noindent
Most of the rules
mimic the behaviour of algorithm $\mathcal{W}$, following the templates of Section~\ref{sec:foundations}.
Consider for example the rule~\rulename{($\mathcal{IW}$-Let-Miss)}: first, the types of $e_1$ and $e_2$ are incrementally inferred in the environments prescribed by the relevant calls to the function $\tr$.
The result associated with the whole expression \textbf{let-in} is then the pair $(\tau_2, \theta_2 \circ \theta_1)$, where $\theta_1$ and $\theta_2$ are the substitutions obtained recursively from $e_1$ and $e_2$, respectively.

Consider the term $\hat{e} = \fop{n}{\fapp{\mathit{fact}}{(\fop{n}{1}{-}})}{*}$ discussed at the end of Section~\ref{sec:intro-example}.
Since in the entry for $\hat{e}$ the cache records the required substitution besides the augmented type, we save running time for the inference.
\begin{figure}[!htb]
\scriptsize
\begin{mathpar}
{   \inferrule*[lab={($\mathcal{IW}$-Hit)}]%
    {
        C(e) = \langle \Gamma', (\tau, \theta) \rangle \\
        \compatenv{\Gamma}{\Gamma'}{e}
    }
    {
        \Gamma, C \vdash_{\mathcal{IW}} e : (\tau, \theta) \triangleright C
    }

    \inferrule*[lab={($\mathcal{IW}$-Const-Miss)},right={$\mathit{miss} (C, c, \Gamma)$}]%
    {
        \Gamma \vdash_\mathcal{W} c : (\tau, \theta)\\
        C' = C \cup \{ (c, \emptyset,  (\tau, \theta)) \}
    }
    {
        \Gamma, C \vdash_{\mathcal{IW}} c : (\tau, \theta) \triangleright C'
    }
  }

    \inferrule*[lab={($\mathcal{IW}$-Var-Miss)},right={$\mathit{miss} (C, x, \Gamma)$}]%
    {
        \Gamma \vdash_\mathcal{W} x : (\tau, \theta)\\
        C' = C \cup \{ (x, \restrict{\Gamma}{x},  (\tau, \theta)) \}
    }
    {
        \Gamma, C \vdash_{\mathcal{IW}} x : (\tau, \theta) \triangleright C'
    }

    \inferrule*[lab={($\mathcal{IW}$-Abs-Miss)},right={$\mathit{miss} (C, \ffun{f}{x}{e}, \Gamma) \land \alpha_x, \alpha_e \,\textit{fresh}$}]%
    {
        \hltr{\Gamma[x \mapsto \alpha_x, f \mapsto \alpha_x \rightarrow \alpha_e]}, C \vdash_{\mathcal{IW}} e  : (\tau_e, \theta_e) \triangleright C'' \\
        \hlcheckjoin{
            \theta_1 = \mathcal{U} (\tau_e, \theta_e \alpha_e)
        \land
        (\tau, \theta) = \big( (\theta_1\,(\theta_e\,\alpha_x)) \rightarrow (\theta_1\,\tau_e), \theta_1 \circ \theta_e \big)} \\
        C' = C'' \cup \{ (\ffun{f}{x : \tau_x}{e}, \restrict{\Gamma}{\mathit{FV}(\ffun{f}{x : \tau_x}{e}}, (\tau, \theta)) \}
    }
    {
        \Gamma, C \vdash_{\mathcal{IW}} \ffun{f}{x}{e} : (\tau, \theta) \triangleright C'
    }

    \inferrule*[lab={($\mathcal{IW}$-Op-Miss)},right={$\mathit{miss} (C, \fop{e_1}{e_2}{op}, \Gamma) \land \tau_{op}, \tau_{res} = \{ \mathtt{int}, \mathtt{bool}\}$}]%
    {
        \hltr{\Gamma}, C \vdash_{\mathcal{IW}} e_1 : (\tau_1, \theta_1) \triangleright C'' \\
        \hltr{\theta_1\,\Gamma}, C \vdash_{\mathcal{IW}} e_2 : (\tau_2, \theta_2) \triangleright C''' \\
        \hlcheckjoin{\theta_3 = \mathcal{U}(\theta_2\,\tau_1, \tau_{op}) \land  \theta_4 = \mathcal{U}(\theta_3\,\tau_2, \tau_{op}) \land (\tau, \theta) = (\tau_{res}, \theta_4 \circ \theta_3 \circ \theta_2 \circ \theta_1 )} \\
        C' = C'' \cup C''' \cup \{ (\fop{e_1}{e_2}{op}, \restrict{\Gamma}{\mathit{FV}(\fop{e_1}{e_2}{op})}, (\tau, \theta)) \}
    }
    {
        \Gamma, C \vdash_{\mathcal{IW}} \fop{e_1}{e_2}{op}  : (\tau, \theta) \triangleright C'
    }

    \inferrule*[lab={($\mathcal{IW}$-App-Miss)},right={$\mathit{miss} (C, \fapp{e_1}{e_2}, \Gamma) \land \alpha \,\textit{ fresh}$}]%
    {
        \hltr{\Gamma}, C \vdash_{\mathcal{IW}} e_1 : (\tau_1, \theta_1) \rightarrow \tau_e \triangleright C'' \\
        \hltr{\theta_1\,\Gamma}, C \vdash_{\mathcal{IW}} e_2 : (\tau_2, \theta_2) \triangleright C''' \\
        \hlcheckjoin{ \theta_3 = \mathcal{U} (\theta_2\,\tau_1, \tau_2 \rightarrow \alpha) \land (\tau, \theta)= (\theta_3\,\alpha, \theta_3 \circ \theta_2 \circ \theta_1)} \\
        C' = C'' \cup C''' \cup \{ (\fapp{e_1}{e_2}, \restrict{\Gamma}{\mathit{FV}(\fapp{e_1}{e_2})}, (\tau, \theta)) \}
    }
    {
        \Gamma, C \vdash_{\mathcal{IW}} \fapp{e_1}{e_2} : (\tau, \theta) \triangleright C'
    }

    \inferrule*[lab={($\mathcal{IW}$-If-Miss)},right={$\mathit{miss} (C, \fifthenelse{e_1}{e_2}{e_3}, \Gamma)$}]%
    {
        \hltr{\Gamma}, C \vdash_{\mathcal{IW}} e_1 : (\tau_1, \theta_1) \triangleright C'' \\
        \hltr{\theta_1\,\Gamma}, C \vdash_{\mathcal{IW}} e_2 : (\tau_2, \theta_2) \triangleright C''' \\
        \hltr{\theta_2 (\theta_1\,\Gamma)}, C \vdash_{\mathcal{IW}} e_3 : (\tau_3, \theta_3) \triangleright C^{iv} \\
        \hlcheckjoin{\theta_4 = \mathcal{U}(\theta_3 (\theta_2\,\tau_1), bool) \land \theta_5 = \mathcal{U}(\theta_4\,\tau_3, \theta_4 (\theta_3\,\tau_1)) \land
        (\tau, \theta) = (\theta_5 (\theta_4\,\tau_3), \theta_5 \circ \theta_4 \circ \theta_3 \circ \theta_2)} \\
        C' = C'' \cup C''' \cup C^{iv} \cup \{ (\fifthenelse{e_1}{e_2}{e_3}, \restrict{\Gamma}{\mathit{FV}(\fifthenelse{e_1}{e_2}{e_3})}, (\tau, \theta)) \}
    }
    {
        \Gamma, C \vdash_{\mathcal{IW}} \fifthenelse{e_1}{e_2}{e_3} : (\tau, \theta) \triangleright C'
    }

    \inferrule*[lab={($\mathcal{IW}$-Let-Miss)},right={$\mathit{miss} (C, \flet{x}{e_1}{e_3}, \Gamma)$}]%
    {
        \hltr{\Gamma}, C \vdash_{\mathcal{IW}} e_2 : (\tau_2, \theta_2) \triangleright C'' \\
        \hltr{(\theta_1\,\Gamma)[x \mapsto \tau_2]}, C \vdash_{\mathcal{IW}} e_3 : (\tau_3, \theta_3) \triangleright C''' \\
        \hlcheckjoin{(\tau, \theta) = (\tau_3, \theta_3 \circ \theta_1)}\\
        C' = C'' \cup C''' \cup \{ (\flet{x}{e_2}{e_3}, \restrict{\Gamma}{\mathit{FV}(\flet{x}{e_2}{e_3})}, (\tau, \theta)) \}
    }
    {
        \Gamma, C \vdash_{\mathcal{IW}} \flet{x}{e_2}{e_3} : (\tau, \theta) \triangleright C'
    }
\end{mathpar}
\begin{equation}\label{def:comp-fun-ti}
   \text{with}\,\, \compatenv{\Gamma}{\Gamma'}{e} \triangleq dom(\Gamma) \supseteq \mathit{FV}(e) \land dom(\Gamma') \supseteq \mathit{FV}(e) \land \forall y \in \mathit{FV}(e) \,.\, \mathcal{U} (\Gamma(y), \Gamma'(y))
\end{equation}
\caption{Rules defining incremental algorithm $\mathcal{IW}$ to infer \texttt{FUN} types.}\label{fig:fun-simple-inference}
\end{figure}

\paragraph{\stepfour}
To prove the incremental algorithm $\mathcal{IW}$ coherent with $\mathcal{W}$, we first show that $\ecompatenv$ satisfies Definition~\ref{def:tec}.
\begin{restatable}{lemma}{lemmaIWcoherence}\label{lemma:IW-coherence}
The predicate
    $\ecompatenv$ of Eq.~(\ref{def:comp-fun-ti}) in \figurename~\ref{fig:fun-simple-inference} expresses compatibility.
\end{restatable}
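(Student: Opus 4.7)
The plan is to prove the lemma by structural induction on the expression $e$, closely mirroring the pattern used for Lemma~\ref{lemma:IF-coherence} but adapted to the inference setting where results are pairs $(\tau, \theta)$ rather than bare types. Concretely, I want to show that whenever $\compatenv{\Gamma}{\Gamma'}{e}$ holds in the sense of Eq.~(\ref{def:comp-fun-ti}) and $\Gamma' \vdash_\mathcal{W} e : (\tau, \theta)$, the derivation for $\Gamma'$ can be reassembled over $\Gamma$ by a rule-by-rule reconstruction that reuses the same subderivations delivered by the inductive hypothesis.

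The base cases are straightforward. For constants, the conclusion of rule~\rulename{($\mathcal{W}$-Const)} is independent of the environment, so $\Gamma \vdash_\mathcal{W} c : (\tau_c, \mathit{id})$ follows immediately. For variables, $\mathit{FV}(x) = \{x\}$ combined with $\mathcal{U}(\Gamma(x), \Gamma'(x))$ and the domain inclusions give that $\Gamma(x)$ and $\Gamma'(x)$ coincide (up to the intended convention that $\mathcal{U}(\tau_1,\tau_2)$ in this predicate means compatibility of the types at a free variable), so rule~\rulename{($\mathcal{W}$-Var)} yields $\Gamma \vdash_\mathcal{W} x : (\Gamma(x),\mathit{id})$ with the same result as for $\Gamma'$.

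The inductive cases amount to verifying, for each syntactic constructor, that the $\tr$ function indicated in Figure~\ref{fig:algo-w} preserves compatibility when passing to subterms, so that the inductive hypothesis applies and the $\hlcheckjoin{\echeckjoin}$ clause recombines the results identically. For~\rulename{($\mathcal{W}$-Op)},~\rulename{($\mathcal{W}$-App)} and~\rulename{($\mathcal{W}$-If)} this requires showing that if $\Gamma$ and $\Gamma'$ are compatible on $\mathit{FV}(e)$, then the substituted environments $\theta\,\Gamma$ and $\theta\,\Gamma'$ remain compatible on the free variables of the next subterm: since substitution is applied pointwise and $\mathcal{U}$ is closed under substitution (applying the same $\theta$ to unifiable types leaves them unifiable), this follows directly. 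For~\rulename{($\mathcal{W}$-Abs)} and~\rulename{($\mathcal{W}$-Let)}, the extended environments agree on the newly bound variable (because the extension uses the same type or fresh variable on both sides), and on the remaining free variables they inherit compatibility from $\Gamma$ and $\Gamma'$.

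The main obstacle I anticipate is handling the variable case and, more generally, keeping the exact result $(\tau,\theta)$ invariant under a notion of compatibility that is phrased in terms of unification rather than equality. The delicate point is that the substitutions produced by $\mathcal{W}$ are computed against the specific types fetched from the environment, so one must ensure that the weaker compatibility provided by $\ecompatenv$ is nevertheless strong enough on $\mathit{FV}(e)$ that the unifications in $\hlcheckjoin{\echeckjoin}$ succeed identically on both sides and yield the same $\theta$. The bookkeeping that carries $\mathcal{U}$-compatibility through the chained substitutions $\theta_i \circ \cdots \circ \theta_1$ in the more complex rules (\rulename{($\mathcal{W}$-If)} in particular) is where the argument gets technically heaviest, and is the step I would write out in full detail to obtain the statement of Lemma~\ref{lemma:IW-coherence}.
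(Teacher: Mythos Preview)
Your structural-induction plan is a far more detailed elaboration than what the paper actually provides. The paper's proof consists of two sentences: it notes that $\mathcal W$ is syntax-directed, so the shape of any derivation is fixed by $e$, and then asserts that every premise mentioning $\Gamma'$ continues to hold with $\Gamma$ in its place because $\Gamma(y)$ unifies with $\Gamma'(y)$ for each free variable $y$, adding parenthetically that $\mathcal U$ is ``reflexive, symmetric and transitive.'' No case analysis is written out. Your per-rule induction is therefore the natural unfolding of the paper's sketch rather than a genuinely different route; what you add is the explicit check that the $\tr$ functions and the chained substitutions preserve the compatibility relation across subterms.

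The obstacle you single out is real, and the paper's argument does not dispose of it any more convincingly than your proposal does. Definition~\ref{def:tec} requires that the \emph{identical} result $R=(\tau,\theta)$ be derivable under $\Gamma$, yet for a variable $x$ rule~\rulename{($\mathcal{W}$-Var)} yields $(\Gamma(x),\mathit{id})$ against $(\Gamma'(x),\mathit{id})$, and bare unifiability of $\Gamma(x)$ with $\Gamma'(x)$ does not make these equal (take $\alpha$ versus $\mathtt{int}$). The paper's remark that $\mathcal U$ is transitive is likewise dubious on a literal reading, since unifiability of types is not transitive. Your hedge ``up to the intended convention'' is therefore the right stance: both your argument and the paper's go through only if $\mathcal{U}(\Gamma(y),\Gamma'(y))$ in Eq.~(\ref{def:comp-fun-ti}) is read strongly enough---essentially as agreement on the free variables, possibly up to renaming of fresh type variables---to force identical inference results. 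Under that reading your induction is correct and matches the paper in spirit; without it, neither proof closes the variable case.
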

%
%
Again, the following theorem is an instance of Theorem~\ref{thm:typecoherence}, and follows from the above lemma.
\begin{restatable}{theorem}{thmIWcoherence}\label{thm-IW-coherence}
$ \quad
\forall \Gamma, C, e.\
\Gamma    \vdash_\mathcal{W} e : (\tau, \theta)  \iff
\Gamma, C \vdash_\mathcal{IW} e : (\tau, \theta) \triangleright C'
$
\end{restatable}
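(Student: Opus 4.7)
The plan is to prove Theorem~\ref{thm-IW-coherence} as a direct instance of the general coherence result in Theorem~\ref{thm:typecoherence}. That generic theorem reduces coherence between an arbitrary syntax-directed typing algorithm $\mathcal{A}$ and its incremental counterpart $\mathcal{IA}$ to the single side condition that the chosen $\ecompatenv$ expresses compatibility in the sense of Definition~\ref{def:tec}. So the real content is encapsulated in Lemma~\ref{lemma:IW-coherence}, and the remainder is essentially bookkeeping.

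First I would verify that the rewriting of $\mathcal{W}$ in Figure~\ref{fig:algo-w} conforms to the standard rule format of Section~\ref{sec:foundations}. The highlighted fragments already identify the functions $\tr^{e}_{e_i}$ and the predicate $\echeckjoin_e$ for each construct: for instance, in \rulename{($\mathcal{W}$-Let)} we read off $\tr^{\flet{x}{e_2}{e_3}}_{e_2}(\Gamma,\emptyset)=\Gamma$, $\tr^{\flet{x}{e_2}{e_3}}_{e_3}(\Gamma,(\tau_2,\theta_2))=(\theta_2\,\Gamma)[x\mapsto\tau_2]$, and $\echeckjoin_{\flet{x}{e_2}{e_3}}$ is the equation $(\tau,\theta)=(\tau_3,\theta_3\circ\theta_2)$. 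I would then check that the rules of $\mathcal{IW}$ in Figure~\ref{fig:fun-simple-inference} arise, one construct at a time, by mechanical instantiation of the three rule templates (hit, leaf-miss, inductive-miss) from Section~\ref{sec:foundations} with exactly those $\tr$ and $\echeckjoin$. Since every rule in Figure~\ref{fig:fun-simple-inference} is indeed obtained in this way, all the structural hypotheses for invoking Theorem~\ref{thm:typecoherence} on the pair $(\mathcal{W},\mathcal{IW})$ are met, apart from the compatibility side condition.

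The final step is to appeal to Lemma~\ref{lemma:IW-coherence}, which states precisely that the $\ecompatenv$ of Eq.~(\ref{def:comp-fun-ti}) expresses compatibility. With this ingredient in place, Theorem~\ref{thm:typecoherence} immediately yields both directions of the biconditional claimed in Theorem~\ref{thm-IW-coherence}, essentially by the same induction that establishes the generic coherence result.

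The main obstacle is not in the theorem itself but in the supporting Lemma~\ref{lemma:IW-coherence}: one must show that whenever $\Gamma$ and $\Gamma'$ agree up to unification on the free variables of $e$ and $\Gamma'\vdash_\mathcal{W} e:(\tau,\theta)$, the \emph{same} pair $(\tau,\theta)$ is produced starting from $\Gamma$. This is subtler than for the type-checking case of Section~\ref{sec:itc-fun} (Lemma~\ref{lemma:IF-coherence}) because $\mathcal{W}$ threads substitutions through its premises and unifies against lookups $\Gamma(y)$ that now only agree with $\Gamma'(y)$ up to a common unifier. I would proceed by induction on the derivation of $\Gamma'\vdash_\mathcal{W} e:(\tau,\theta)$, exploiting the principal-unifier property of $\mathcal{U}$ to factor away any divergence between $\Gamma$ and $\Gamma'$ on free variables, and paying special attention to \rulename{($\mathcal{W}$-Abs)} and \rulename{($\mathcal{W}$-Let)}, where fresh type variables and newly-bound identifiers can mask the compatibility condition. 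Once that induction is in place, Theorem~\ref{thm-IW-coherence} follows as an essentially syntactic consequence.
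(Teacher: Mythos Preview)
Your proposal is correct and follows essentially the same approach as the paper: the paper's proof is the single line ``Immediate by Lemma~\ref{lemma:IW-coherence}'', which is exactly your strategy of invoking Theorem~\ref{thm:typecoherence} once the compatibility lemma is in hand. Your additional bookkeeping (checking that $\mathcal{W}$ fits the rule format and that $\mathcal{IW}$ arises from the templates) and your sketch of the inductive argument for Lemma~\ref{lemma:IW-coherence} are more detailed than what the paper spells out, but they do not diverge from its route.
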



\section{Incremental checking of non-interference}\label{sec:itc-imp}
%
Here we use incrementally the typing algorithm $\mathcal S$ of Volpano-Smith-Irvine~\cite{volpano1996sound,smith2006principles} for checking non-interference policies, obtaining the algorithm $\mathcal{IS}$.
We assume that the variables of programs are classified either as high, $H$, or low $L$.
Intuitively, a program enjoys the non-interference property when the values of low level variables do not depend on those of high level ones.

As usual, assume a simple imperative language {\footnotesize \texttt{WHILE}}, whose syntax is below ($\mathit{Var}$ denotes the set of program variables).
%
\begin{align*}
AExpr \ni a &\Coloneqq n \mid x \mid \wop{a_1}{a_2}{op_a}
\qquad \qquad n \in \mathbb{N}, \quad \mathtt{op_a} \in \{\mathtt{+}, \mathtt{*}, \mathtt{-}, \ldots\}, \quad x \in\mathit{Var} \\
BExpr \ni b &\Coloneqq \mathtt{true} \mid \mathtt{false} \mid \wop{b_1}{b_2}{or} \mid \wop{}{b}{not} \mid \wop{a_1}{a_2}{\leq}\\
Stmt \ni c &\Coloneqq \wskip \mid \wassign{x}{a} \mid \wseq{c_1}{c_2} \mid \wifthenelse{b}{c_1}{c_2} \mid \wwhile{b}{c}\\
Phrase \ni p &\Coloneqq a \mid b \mid c\\
DType \ni \tau & \Coloneqq H \mid L
\quad
PType \ni \varsigma  \Coloneqq \tau \mid \tau\,\mathit{var} \mid \tau\,\mathit{cmd}
\quad
Env \ni \Gamma \Coloneqq \emptyset \mid \Gamma[p \mapsto \varsigma]
\end{align*}
The type checking algorithm has judgements of the form
\begin{align*}
    \Gamma \vdash_\mathcal{S} p : \varsigma
\end{align*}
where $\varsigma \in PType = Res$, and its rules are shown in~\figurename~\ref{fig:imp-S-types}.
Also in this case we have coloured and framed the results of $\hltr{\tr}$ and $\hlcheckjoin{\echeckjoin}$.
\begin{figure}[!tb]
  \scriptsize
    \begin{mathpar}
        \inferrule[($\mathcal{S}$-Const)]
        {\hlcheckjoin{c \in \mathbb{N} \cup \{true, false\}}}
        {\Gamma \vdash_{\mathcal{S}} c : L}

        \inferrule[($\mathcal{S}$-Var)]
        {
            \hlcheckjoin{\Gamma (x) = \tau\,\mathit{var} \land \varsigma = \Gamma (x)}
        }
        {\Gamma \vdash_{\mathcal{S}} x : \varsigma}

        \inferrule[($\mathcal{S}$-Not)]
        {
            \Gamma \vdash_{\mathcal{S}} b : \tau_b\\
            \hlcheckjoin{\tau = \tau_b}
        }
        {
            \Gamma \vdash_{\mathcal{S}} \wop{}{b}{\mathtt{not}} : \tau
        }

        \inferrule[($\mathcal{S}$-Skip)]
        {
            \hlcheckjoin{\varsigma = H\,\mathit{cmd}}
        }
        {
            \Gamma \vdash_{\mathcal{S}} \wskip : \varsigma
        }

        \inferrule[($\mathcal{S}$-Op)]
        {
            \hltr{\Gamma} \vdash_{\mathcal{S}} p_0 : \tau_0\\
            \hltr{\Gamma} \vdash_{\mathcal{S}} p_1 : \tau_1\\
            \hlcheckjoin{op \in \{ \mathtt{+}, \mathtt{*}, \mathtt{-}, \mathtt{or}, \mathtt{\leq}, \ldots \} \land \tau_0 = \tau_1 \land \varsigma = \tau_0}
        }
        {
            \Gamma \vdash_{\mathcal{S}} \wop{p_0}{p_1}{op} : \varsigma
        }

        \inferrule[($\mathcal{S}$-Assign)]
        {
            \hltr{\Gamma} \vdash_{\mathcal{S}} a : \tau_a\\
            \hlcheckjoin{\Gamma(x) = \tau\,\mathit{var} \land \tau = \tau_a \land \varsigma = \tau\,\mathit{cmd}}
        }
        {
            \Gamma \vdash_{\mathcal{S}} \wassign{x}{a} : \varsigma
        }

        \inferrule[($\mathcal{S}$-If)]
        {
            \hltr{\Gamma} \vdash_{\mathcal{S}} b : \tau_b\\
            \hltr{\Gamma} \vdash_{\mathcal{S}} c_1 : \tau_1\,\mathit{cmd}\\
            \hltr{\Gamma} \vdash_{\mathcal{S}} c_2 : \tau_2\,\mathit{cmd}\\
            \hlcheckjoin{\tau_b = \tau_1 = \tau_2 \land \varsigma = \tau_b\,\mathit{cmd}}
        }
        {
            \Gamma \vdash_{\mathcal{S}} \wifthenelse{b}{c_1}{c_2} : \varsigma
        }

        \inferrule[($\mathcal{S}$-While)]
        {
            \hltr{\Gamma} \vdash_{\mathcal{S}} b : \tau_b\\
            \hltr{\Gamma} \vdash_{\mathcal{S}} c_1 : \tau_1\,\mathit{cmd}\\
            \hlcheckjoin{\tau_b = \tau_1 \land \varsigma = \tau_b\,\mathit{cmd}}
        }
        {
            \Gamma \vdash_{\mathcal{S}} \wwhile{b}{c_1} : \varsigma
        }

        \inferrule[($\mathcal{S}$-Seq)]
        {
            \hltr{\Gamma} \vdash_{\mathcal{S}} c_1 : \tau_1\,\mathit{cmd}\\
            \hltr{\Gamma} \vdash_{\mathcal{S}} c_2 : \tau_2\,\mathit{cmd}\\
            \hlcheckjoin{\tau_1 = \tau_2 \land \varsigma = \tau_1\,\mathit{cmd}}
        }
        {
            \Gamma \vdash_{\mathcal{S}} \wseq{c_1}{c_2} : \varsigma
        }

           \inferrule[(SS-Sub)]
        {
            \Gamma \vdash_{\mathcal{S}} p : \varsigma_1\\
            \varsigma_1 \subseteq \varsigma_2
        }
        {
            \Gamma \vdash_{\mathcal{S}} p : \varsigma_2
        }

        \inferrule[(SS-Base)]
        { }
        {
            L \subseteq H
        }

        \inferrule[(SS-Cmd)]
        { \tau' \subseteq \tau }
        {
            \tau\,\mathit{cmd} \subseteq \tau'\,\mathit{cmd}
        }

        \inferrule[(SS-Refl)]
        { }
        {
            \varsigma \subseteq \varsigma
        }

        \inferrule[(SS-Tr)]
        {
            \varsigma_1 \subseteq \varsigma_2\\
            \varsigma_2 \subseteq \varsigma_3
        }
        {
            \varsigma_1 \subseteq \varsigma_3
        }

 \end{mathpar}

    \caption{The rules of the type checking algorithm $\mathcal{S}$ (with subtyping) for {\footnotesize\texttt{WHILE}}.}\label{fig:imp-S-types}
\end{figure}
%
In the following we assume that the initial typing environment $\Gamma$ contains the security level of each variable occurring in the program at hand.

\paragraph{\stepone}
The shape of the caches is as expected:
\[
    C \in Cache = \wp(Phrase \times Env \times PType)
\]

\paragraph{\steptwo}
The function $\ebuildcache$ is defined in~\figurename~\ref{fig:imp-buildcache}.
\begin{figure}[htb!]
\scriptsize
\begin{align*}
    \buildcache{(c : L)}{\Gamma} &\triangleq \{ (c, \emptyset, L) \} \qquad c \in \mathbb{N} \cup \{ true, false\}\\
    \buildcache{(x : \tau)}{\Gamma} &\triangleq \{ (x, [x \mapsto \tau\,\mathit{var}], \tau) \}\\
    \buildcache{(\wop{a_1}{a_2}{op} : \tau)}{\Gamma} &\triangleq \{ (\wop{a_1}{a_2}{op}, \restrict{\Gamma}{\mathit{FV}(\wop{a_1}{a_2}{op})}, \tau) \} \\
    &\qquad\cup (\buildcache{(a_1 : \tau_1)}{\hltr{\Gamma}})\\
    &\qquad\cup (\buildcache{(a_2 : \tau_2)}{\hltr{\Gamma}})\\
    \buildcache{(\wop{a_1}{a_2}{\leq} : \tau)}{\Gamma} &\triangleq \{ (\wop{a_1}{a_2}{\leq}, \restrict{\Gamma}{\mathit{FV}(\wop{a_1}{a_2}{\leq})}, \tau) \} \\
    &\qquad\cup (\buildcache{(a_1 : \tau_1)}{\hltr{\Gamma}})\\
    &\qquad\cup (\buildcache{(a_2 : \tau_2)}{\hltr{\Gamma}})\\
    \buildcache{(\wop{b_1}{b_2}{or} : \tau)}{\Gamma} &\triangleq \{ (\wop{b_1}{b_2}{or}, \restrict{\Gamma}{\mathit{FV}(\wop{b_1}{b_2}{or})}, \tau) \} \\
    &\qquad\cup (\buildcache{(b_1 : \tau_1)}{\hltr{\Gamma}})\\
    &\qquad\cup (\buildcache{(b_2 : \tau_2)}{\hltr{\Gamma}})\\
    \buildcache{(\wop{\!}{b}{not} : \tau)}{\Gamma} &\triangleq \{ (\wop{}{b}{not}, \restrict{\Gamma}{\mathit{FV}(\wop{}{b}{not})}, \tau) \}\\
    &\qquad\cup (\buildcache{ (b : \tau)}{\hltr{\Gamma}})\\
    \buildcache{(skip : H\,\mathit{cmd})}{\Gamma} &\triangleq \{ (skip, \emptyset, H\,\mathit{cmd}) \}\\
    \buildcache{(\wassign{x}{a} : \tau\,\mathit{cmd})}{\Gamma} &\triangleq \{ (\wassign{x}{a}, \restrict{\Gamma}{\mathit{FV}(\wassign{x}{a})}, \tau\,\mathit{cmd}) \}\\
    &\qquad\cup(\buildcache{(x : \tau_x)}{\hltr{\Gamma}}) \\
    &\qquad\cup (\buildcache{(a : \tau_a)}{\hltr{\Gamma}})\\
    \buildcache{(\wifthenelse{b}{c_1}{c_2} : \tau\,\mathit{cmd})}{\Gamma} &\triangleq \{ (\wifthenelse{b}{c_1}{c_2}, \restrict{\Gamma}{\mathit{FV}(\wifthenelse{b}{c_1}{c_2})}, \tau\,\mathit{cmd}) \}\\
    &\qquad\cup(\buildcache{(b : \tau_b)}{\hltr{\Gamma}}) \\
    &\qquad\cup(\buildcache{(c_1 : \tau_1\,\mathit{cmd})}{\hltr{\Gamma}}) \\
    &\qquad\cup(\buildcache{(c_2 : \tau_2\,\mathit{cmd})}{\hltr{\Gamma}}) \\
    \buildcache{(\wwhile{b}{c} : \tau\,\mathit{cmd})}{\Gamma} &\triangleq \{ (\wwhile{b}{c}, \restrict{\Gamma}{\mathit{FV}(\wwhile{b}{c})}, \tau\,\mathit{cmd}) \}\\
    &\qquad\cup(\buildcache{(b : \tau_b)}{\hltr{\Gamma}}) \\
    &\qquad\cup(\buildcache{(c : \tau_c\,\mathit{cmd})}{\hltr{\Gamma}}) \\
    \buildcache{(\wseq{c_1}{c_2} : \tau\,\mathit{cmd})}{\Gamma} &\triangleq \{ (\wseq{c_1}{c_2}, \restrict{\Gamma}{\mathit{FV}(\wseq{c_1}{c_2})}, \tau\,\mathit{cmd}) \}\\
    &\qquad\cup(\buildcache{(c_1 : \tau_1\,\mathit{cmd})}{\hltr{\Gamma}}) \\
    &\qquad\cup(\buildcache{(c_2 : \tau_2\,\mathit{cmd})}{\hltr{\Gamma}})
\end{align*}
\caption{Definition of $\ebuildcache$ for the incremental type checking of {\footnotesize\texttt{WHILE}}.}\label{fig:imp-buildcache}
\vspace{-5mm}
\end{figure}
%

\paragraph{\stepthree}
In \figurename~\ref{fig:imp-S-incremental} we display the rules defining the algorithm $\mathcal{IS}$ with judgements of the following form
\[
    \Gamma, C \vdash_\mathcal{IS} p : \varsigma \triangleright C'
\]
As expected, most of the rules are trivial instantiations of rules in Section~\ref{sec:foundations} that mimic those of the original type checking algorithm.
Of course, $\mathcal{IS}$ inherits unchanged the subtyping relation of $\mathcal{S}$.
\begin{figure}[!htbp]
    \scriptsize
    \begin{mathpar}
        \inferrule*[lab={($\mathcal{IS}$-Hit)}]
        {
            C(p) = \langle \Gamma', \varsigma \rangle\\
            \compatenv{\Gamma}{\Gamma'}{p}
        }
        {
            \Gamma, C \vdash_\mathcal{IS} p : \varsigma \triangleright C
        }
        \quad
        \inferrule*[lab={($\mathcal{IS}$-Const-Miss)},right={$\mathit{miss} (C, c, \Gamma)$}]%
        {
            c \in \mathbb{N} \cup \{ true, false \}\\
            \emptyset \vdash_\mathcal{S} c : \varsigma\\
            C' = C \cup \{ (c, \emptyset, \varsigma) \}
        }
        {
            \Gamma, C \vdash_\mathcal{IS} c : \varsigma \triangleright C'
        }

        \inferrule*[lab={($\mathcal{IS}$-Var-Miss)},right={$\mathit{miss}(C, x, \Gamma)$}]%
        {
            \Gamma \vdash_\mathcal{S} c : \varsigma\\
            \hlcheckjoin{\varsigma = \tau\,\mathit{var}}\\
            C' = C \cup \{ (x, \restrict{\Gamma}{x}, \varsigma) \}
        }
        {
            \Gamma, C \vdash_\mathcal{IS} x : \varsigma \triangleright C'
        }
        \quad
        \inferrule*[lab={($\mathcal{IS}$-Skip-Miss)},right={$\mathit{miss} (C, \wskip, \Gamma)$}]%
        {
            \hltr{\Gamma} \vdash \wskip : \varsigma \\
            C' = C \cup \{ (\wskip, \emptyset, \varsigma) \}
        }
        {
            \Gamma, C \vdash_\mathcal{IS} \wskip : \varsigma \triangleright C'
        }

        \inferrule*[lab={($\mathcal{IS}$-Op-Miss)},right={$\mathit{miss} (C, \wop{a_1}{a_2}{op}, \Gamma)$}]%
        {
            \hltr{\Gamma}, C \vdash_\mathcal{IS} a_1  : \tau_1 \triangleright C'' \\
            \hltr{\Gamma}, C \vdash_\mathcal{IS} a_2  : \tau_2 \triangleright C''' \\
            \hlcheckjoin{\tau_1 = \tau_2 \land \varsigma = \tau_1}\\
            C' = C'' \cup C''' \cup \{ (\wop{a_1}{a_2}{op}, \restrict{\Gamma}{\mathit{FV}(\wop{a_1}{a_2}{op})}, \varsigma) \}
        }
        {
            \Gamma, C \vdash_\mathcal{IS} \wop{a_1}{a_2}{op}  : \varsigma \triangleright C'
        }

        \inferrule*[lab={($\mathcal{IS}$-BOp-Miss)},right={$\mathit{miss} (C, \wop{b_1}{b_2}{or}, \Gamma)$}]%
        {
            \hltr{\Gamma}, C \vdash_\mathcal{IS} b_1  : \tau_1 \triangleright C'' \\
            \hltr{\Gamma}, C \vdash_\mathcal{IS} b_2  : \tau_2 \triangleright C''' \\
            \hlcheckjoin{\tau_1 = \tau_2 \land \varsigma = \tau_1} \\
            C' = C'' \cup C''' \cup \{ (\wop{b_1}{b_2}{or}, \restrict{\Gamma}{\mathit{FV}(\wop{b_1}{b_2}{or})}, \varsigma) \}
        }
        {
            \Gamma, C \vdash_\mathcal{IS} \wop{b_1}{b_2}{or}  : \varsigma \triangleright C'
        }

        \inferrule*[lab={($\mathcal{IS}$-Not-Miss)},right={$\mathit{miss} (C, \wop{}{b}{not}, \Gamma)$}]%
        {
            \hltr{\Gamma}, C \vdash_\mathcal{IS} b : \tau \triangleright C'' \\
            C' = C'' \cup \{ (\wop{}{b}{not}, \restrict{\Gamma}{\mathit{FV}(\wop{}{b}{not})}, \tau) \}
        }
        {
            \hltr{\Gamma}, C \vdash_\mathcal{IS} \wop{}{b}{not}  : \tau \triangleright C'
        }

        \inferrule*[lab={($\mathcal{IS}$-Leq-Miss)},right={$\mathit{miss} (C, \wop{a_1}{a_2}{\leq}, \Gamma)$}]%
        {
            \hltr{\Gamma}, C \vdash_\mathcal{IS} a_1  : \tau_1 \triangleright C'' \\
            \hltr{\Gamma}, C \vdash_\mathcal{IS} a_2  : \tau_2 \triangleright C''' \\
            \hlcheckjoin{\tau_1 = \tau_2 \land \varsigma = \tau_1}\\
            C' = C'' \cup C''' \cup \{ (\wop{a_1}{a_2}{\leq}, \restrict{\Gamma}{\mathit{FV}(\wop{a_1}{a_2}{\leq})}, \varsigma) \}
        }
        {
            \Gamma, C \vdash_\mathcal{IS} \wop{a_1}{a_2}{\leq}  : \varsigma \triangleright C'
        }

        \inferrule*[lab={($\mathcal{IS}$-Assign-Miss)},right={$\mathit{miss} (C, \wassign{x}{a}, \Gamma)$}]%
        {
            \hltr{\Gamma}, C \vdash_\mathcal{IS} x  : \tau_x\,\mathit{var} \triangleright C'' \\
            \hltr{\Gamma}, C \vdash_\mathcal{IS} a  : \tau_a \triangleright C''' \\
            \hlcheckjoin{\tau_a = \tau_x \land \varsigma = \tau_a\,\mathit{cmd}}\\
            C' = C'' \cup C''' \cup \{ (\wassign{x}{a}, \restrict{\Gamma}{\mathit{FV}(\wassign{x}{a})}, \varsigma) \}
        }
        {
            \Gamma, C \vdash_\mathcal{IS} \wassign{x}{a} : \varsigma \triangleright C'
        }

        \inferrule*[lab={($\mathcal{IS}$-If-Miss)},right={$\mathit{miss} (C, \wifthenelse{b}{c_1}{c_2}, \Gamma)$}]%
        {
            \hltr{\Gamma}, C \vdash_\mathcal{IS} b  : \tau_b \triangleright C'' \\
            \hltr{\Gamma}, C \vdash_\mathcal{IS} c_1  : \tau_1\,\mathit{cmd} \triangleright C''' \\
            \hltr{\Gamma}, C \vdash_\mathcal{IS} c_2  : \tau_2\,\mathit{cmd} \triangleright C^{iv} \\
            \hlcheckjoin{\tau_1 = \tau_2 = \tau_b \land \varsigma = \tau_1\,\mathit{cmd}}\\
            C' = C'' \cup C''' \cup C^{iv} \cup \{ (\wifthenelse{b}{c_1}{c_2}, \restrict{\Gamma}{\mathit{FV}(\wifthenelse{b}{c_1}{c_2})}, \varsigma) \}
        }
        {
            \Gamma, C \vdash_\mathcal{IS} \wifthenelse{b}{c_1}{c_2}  : \varsigma \triangleright C'
        }

        \inferrule*[lab={($\mathcal{IS}$-While-Miss)},right={$\mathit{miss} (C, \wwhile{b}{c}, \Gamma)$}]%
        {
            \hltr{\Gamma}, C \vdash_\mathcal{IS} b  : \tau_b \triangleright C'' \\
            \hltr{\Gamma}, C \vdash_\mathcal{IS} c  : \tau_1\,\mathit{cmd} \triangleright C''' \\
            \hlcheckjoin{\tau_1 = \tau_b \land \varsigma = \tau_1\,\mathit{cmd}}\\
            C' = C'' \cup C''' \cup \{ (\wwhile{b}{c}, \restrict{\Gamma}{\mathit{FV}(\wwhile{b}{c})}, \varsigma) \}
        }
        {
            \Gamma, C \vdash_\mathcal{IS} \wwhile{b}{S} : \varsigma \triangleright C'
        }

        \inferrule*[lab={($\mathcal{IS}$-Seq-Miss)},right={$\mathit{miss} (C, \wseq{c_1}{c_2}, \Gamma)$}]%
        {
            \hltr{\Gamma}, C \vdash_\mathcal{IS} c_1  : \tau_1\,\mathit{cmd} \triangleright C'' \\
            \hltr{\Gamma}, C \vdash_\mathcal{IS} c_2  : \tau_2\,\mathit{cmd} \triangleright C''' \\
            \hlcheckjoin{\tau_1 = \tau_2 \land \varsigma = \tau_1\,\mathit{cmd}}\\
            C' = C'' \cup C''' \cup \{ (\wseq{c_1}{c_2}, \restrict{\Gamma}{\mathit{FV}(\wseq{c_1}{c_2})}, \varsigma) \}
        }
        {
            \Gamma, C \vdash_\mathcal{IS} \wseq{c_1}{c_2} : \varsigma \triangleright C'
        }
    \end{mathpar}

    \begin{equation}\label{def:compS}
        \text{with}\,\,\compatenv{\Gamma}{\Gamma'}{p} \triangleq dom(\Gamma) \supseteq \mathit{FV}(p) \land dom(\Gamma') \supseteq \mathit{FV}(p) \land \forall y \in \mathit{FV}(p) \,.\, \Gamma(y) = \Gamma'(y)
    \end{equation}
    \caption{Rules defining incremental algorithm $\mathcal{IS}$ to type check \texttt{WHILE}.}\label{fig:imp-S-incremental}
\end{figure}

\paragraph{\stepfour}
Also in this case the type coherence of algorithm $\mathcal{IS}$ follows from the fact that $\ecompatenv$ satisfies Definition~\ref{def:tec}.
\begin{restatable}{lemma}{lemmaIScoherence}\label{lemma:IS-coherence}
The predicate
    $\ecompatenv$ of Eq.~(\ref{def:compS}) in \figurename~\ref{fig:imp-S-incremental} expresses compatibility.
\end{restatable}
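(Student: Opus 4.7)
The plan is to prove the implication $\compatenv{\Gamma}{\Gamma'}{p} \land \Gamma' \vdash_\mathcal{S} p : \varsigma \implies \Gamma \vdash_\mathcal{S} p : \varsigma$ by structural induction on the derivation of $\Gamma' \vdash_\mathcal{S} p : \varsigma$. The compatibility hypothesis states exactly that both $\Gamma$ and $\Gamma'$ contain $\mathit{FV}(p)$ in their domain and agree pointwise on it, so intuitively the derivation only ever inspects those entries of $\Gamma'$ on which $\Gamma$ agrees, and thus can be rerun verbatim under $\Gamma$.

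The base cases are routine. For rules ($\mathcal{S}$-Const) and ($\mathcal{S}$-Skip) there are no free variables, so $\hlcheckjoin{\echeckjoin}$ is independent of the environment and the rule applies under $\Gamma$ with no change. For ($\mathcal{S}$-Var) with $p = x$, we have $\mathit{FV}(x) = \{x\}$, so compatibility gives $\Gamma(x) = \Gamma'(x) = \tau\,\mathit{var}$, which is precisely what the rule needs to conclude $\Gamma \vdash_\mathcal{S} x : \varsigma$.

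For each inductive case (the operation rules, ($\mathcal{S}$-Assign), ($\mathcal{S}$-If), ($\mathcal{S}$-While), ($\mathcal{S}$-Seq), and ($\mathcal{S}$-Not)) the strategy is uniform. Since the $\hltr{\tr}$ functions are all the identity here, the subderivations appear under the same environment $\Gamma'$. Moreover, for every subphrase $p_i$ of $p$ we have $\mathit{FV}(p_i) \subseteq \mathit{FV}(p)$, so the compatibility hypothesis propagates: $\compatenv{\Gamma}{\Gamma'}{p_i}$ holds. The induction hypothesis then yields $\Gamma \vdash_\mathcal{S} p_i : \varsigma_i$ for each $i$, and since the side condition $\hlcheckjoin{\echeckjoin}$ only depends on the local results $\varsigma_i$ (plus, in ($\mathcal{S}$-Assign), on $\Gamma(x)$ which equals $\Gamma'(x)$ because $x \in \mathit{FV}(\wassign{x}{a})$), we can reassemble the conclusion $\Gamma \vdash_\mathcal{S} p : \varsigma$ using the same rule.

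The subtyping rules (SS-Sub), (SS-Base), (SS-Cmd), (SS-Refl) and (SS-Tr) require a small remark: only (SS-Sub) has a typing premise, to which the induction hypothesis applies directly, while the rest manipulate only the types $\varsigma_1 \subseteq \varsigma_2$ and are environment-independent. The main (and only) obstacle is being careful in the ($\mathcal{S}$-Assign) case to note that the lookup $\Gamma'(x) = \tau\,\mathit{var}$ in the side condition survives the switch to $\Gamma$ precisely because $x \in \mathit{FV}(\wassign{x}{a})$; once this is observed, the rest of the case analysis is mechanical and the lemma follows.
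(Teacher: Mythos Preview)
Your proof is correct and follows exactly the same idea as the paper, which simply records the proof as ``Immediate'' (relying on the observation, spelled out in the analogous Lemma~\ref{lemma:IF-coherence} for \texttt{FUN}, that typing only depends on the values of $\Gamma$ at the free variables of $p$). Your structural induction is just the explicit unfolding of that remark, and the one place where care is needed --- the lookup $\Gamma(x)$ in \rulename{($\mathcal{S}$-Assign)} --- you handle correctly.
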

%
%
Now we have the following theorem, again an instance of Theorem~\ref{thm:typecoherence}.
\begin{restatable}{theorem}{thmIScoherence}\label{thm-IS-coherence}
$ \quad
\forall \Gamma, C, e.\
\Gamma    \vdash_\mathcal{S} e : \tau  \iff
\Gamma, C \vdash_\mathcal{IS} e : \tau \triangleright C'
$
\end{restatable}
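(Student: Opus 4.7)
The plan is to obtain Theorem~\ref{thm-IS-coherence} as a direct instantiation of the generic coherence result Theorem~\ref{thm:typecoherence}. By that theorem, applied with $\mathcal{A} := \mathcal{S}$ and $\mathcal{IA} := \mathcal{IS}$, it suffices to prove Lemma~\ref{lemma:IS-coherence}, namely that the predicate $\ecompatenv$ defined by Eq.~(\ref{def:compS}) expresses compatibility in the sense of Definition~\ref{def:tec}. Once the lemma is in hand the theorem follows with no further work.

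To establish the lemma, fix $\Gamma, \Gamma', p, \varsigma$ such that $\compatenv{\Gamma}{\Gamma'}{p}$ holds and $\Gamma' \vdash_\mathcal{S} p : \varsigma$, and show $\Gamma \vdash_\mathcal{S} p : \varsigma$. I would proceed by induction on the derivation of $\Gamma' \vdash_\mathcal{S} p : \varsigma$. The key observation is that every rule in \figurename~\ref{fig:imp-S-types} uses the environment in only two ways: either through a pointwise lookup $\Gamma'(x)$ (in $\mathcal{S}$-Var and $\mathcal{S}$-Assign), where $x \in \mathit{FV}(p)$ and hence $\Gamma(x) = \Gamma'(x)$ by hypothesis; or by propagating the environment unchanged to the premises of recursive rules, since for {\footnotesize \texttt{WHILE}} the function $\hltr{\tr}$ is the identity everywhere. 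For each inductive case the side conditions collected in $\hlcheckjoin{\echeckjoin}$ do not mention $\Gamma'$ and therefore transfer verbatim; it remains only to reapply the inductive hypothesis to each premise after noting that $\mathit{FV}(p_i) \subseteq \mathit{FV}(p)$ for every immediate subphrase $p_i$, so the compatibility predicate on $p$ restricts to compatibility on each $p_i$. The subtyping rules (SS-Sub, SS-Base, SS-Cmd, SS-Refl, SS-Tr) are either environment-free or dispatched by a single use of the inductive hypothesis on their single typing premise.

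The main obstacle is purely bookkeeping: one must verify for each rule of \figurename~\ref{fig:imp-S-types} the free-variable containment $\mathit{FV}(p_i) \subseteq \mathit{FV}(p)$ and that $\mathit{dom}(\Gamma)$, which by hypothesis covers $\mathit{FV}(p)$, in particular covers $\mathit{FV}(p_i)$. No genuine subtlety arises here because {\footnotesize \texttt{WHILE}} has no binding constructs that extend the environment while type-checking proceeds (in contrast with the $\lambda$- and \textbf{let}-bindings of \texttt{FUN}, where an analogous argument in Lemma~\ref{lemma:IF-coherence} must explicitly show that extending both $\Gamma$ and $\Gamma'$ with a fresh binder preserves compatibility on the body). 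Hence the compatibility predicate slides unchanged down the whole derivation, the inductive step goes through uniformly, and Lemma~\ref{lemma:IS-coherence}—and with it Theorem~\ref{thm-IS-coherence}—follows.
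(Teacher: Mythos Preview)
Your proposal is correct and follows exactly the paper's approach: the theorem is obtained as an instance of Theorem~\ref{thm:typecoherence} via Lemma~\ref{lemma:IS-coherence}, whose proof the paper records simply as ``Immediate.'' You supply more detail than the paper on why the lemma holds (the induction on derivations and the observation that {\footnotesize\texttt{WHILE}} has no binders), but the route is the same.
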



\section{Implementation and some experiments}\label{sec:exp}
%

We have implemented in OCaml our proposal making incremental the usage the type-checker of MinCaml~\cite{MinCaml}.\footnote{Available at \url{https://github.com/mcaos/incremental-mincaml}}
It was enough wrapping it as dictated by the formal definitions of Section~\ref{sec:foundations}.
In detail, caches and type environments are implemented as hash-tables, so their handling is done almost in constant time.
The memory overhead due to the cache is $\mathcal{O}(n \times m)$, where $n$ is the size of the program under analysis and $m$ is the number of variables therein.
The other possible time consuming part concerns checking environment compatibility.
The key idea to make $\ecompatenv$ efficient is to compute the sets of the free variables beforehand, and to store them as additional annotations on the aAST.
Summing up, implementing our schema is not too demanding, since it can be done with standard data structures.

%


Next, we show that
\begin{enumerate*}[label=(\roman*)]
\item the cost of using the type checker incrementally depends on the size of \emph{diffs};
\item its performance increases as these become smaller; and
\item the incremental usage is almost always faster than re-using the standard one.
\end{enumerate*}
The comparison is done by type checking synthetic programs with (binary and complete) aAST of increasing depth from 8 to 16, and with a number of variables ranging from 1 to $2^{15}$.
All the internal nodes are binary operators and the leaves are free variables.
This test suites are intended to stress our incremental algorithm in the worst, yet artificial case.
The measures are obtained using the library \emph{Benchmark} that takes into account the overhead of OCaml runtime.\footnote{%
Available at \url{https://github.com/Chris00/ocaml-benchmark}}

To test the efficiency of caching we first re-typed twice the program with no change, starting with an empty cache.
Table~\ref{tab:identity} displays the number of re-typings per second in function of the depth of the aAST and the number of variables in the program.
\begin{table}[bt]
    \caption{Experimental results about caching in terms of tree re-checks per second for the standard and the incremental usage of the type checker.}\label{tab:identity}
    \begin{center}
        \begin{tabular}{llrrr}
        \toprule
        {Depth\ \ \ \ } & {Vars} & {\ \ \ Standard} & {\ \ \ Incremental} & {\ \ \ Incremental$/$Standard ratio} \\
        \midrule
        $16$ & $1$ & $84.77$ & $206492.96$ & $2435.93$\\
        $16$ & $2^{7}$ & $72.65$ & $6642.75$ & $91.44$\\
        $16$ & $2^{9}$ & $72.20$ & $1417.95$ & $19.64$\\
        $16$ & $2^{11}$ & $68.01$ & $373.61$ & $5.49$\\
        $16$ & $2^{13}$ & $60.33$ & $96.05$ & $1.59$\\
        $16$ & $2^{15}$ & $57.96$ & $34.75$ & $0.60$\\
        \bottomrule
        \end{tabular}
        \end{center}
  \end{table}
Clearly, the overhead for caching is largely acceptable -- and caching is also beneficial when the number of free variables is not too large w.r.t.~the aAST depth because the results of common subtrees are re-used.

Then, we have simulated program changes by invalidating parts of caches that correspond to the rightmost subexpression at different depths.
Note that invalidating cache entries for the \emph{diff} subexpression $e'$ of $e$ requires to invalidate
\begin{enumerate*}[label=(\roman*)]
\item all the entries for the nodes in the path from the root of the aAST of $e$ to $e'$ and
\item all the entries for $e'$ and its subexpressions, recursively.
\end{enumerate*}
The plots in Figure~\ref{fig:diff-cmp} represent the number of re-typings per second vs. the size of the \emph{diff} for a few choices of aAST depth and number of variables.
However, the shape of the curves is essentially the same also for different values, as shown by the additional diagrams in the Appendix (more are available in the GitHub repository).
\begin{figure}[tb]
  \centering
  \begin{subfigure}[b]{0.47\textwidth}
      \includegraphics[width=\textwidth]{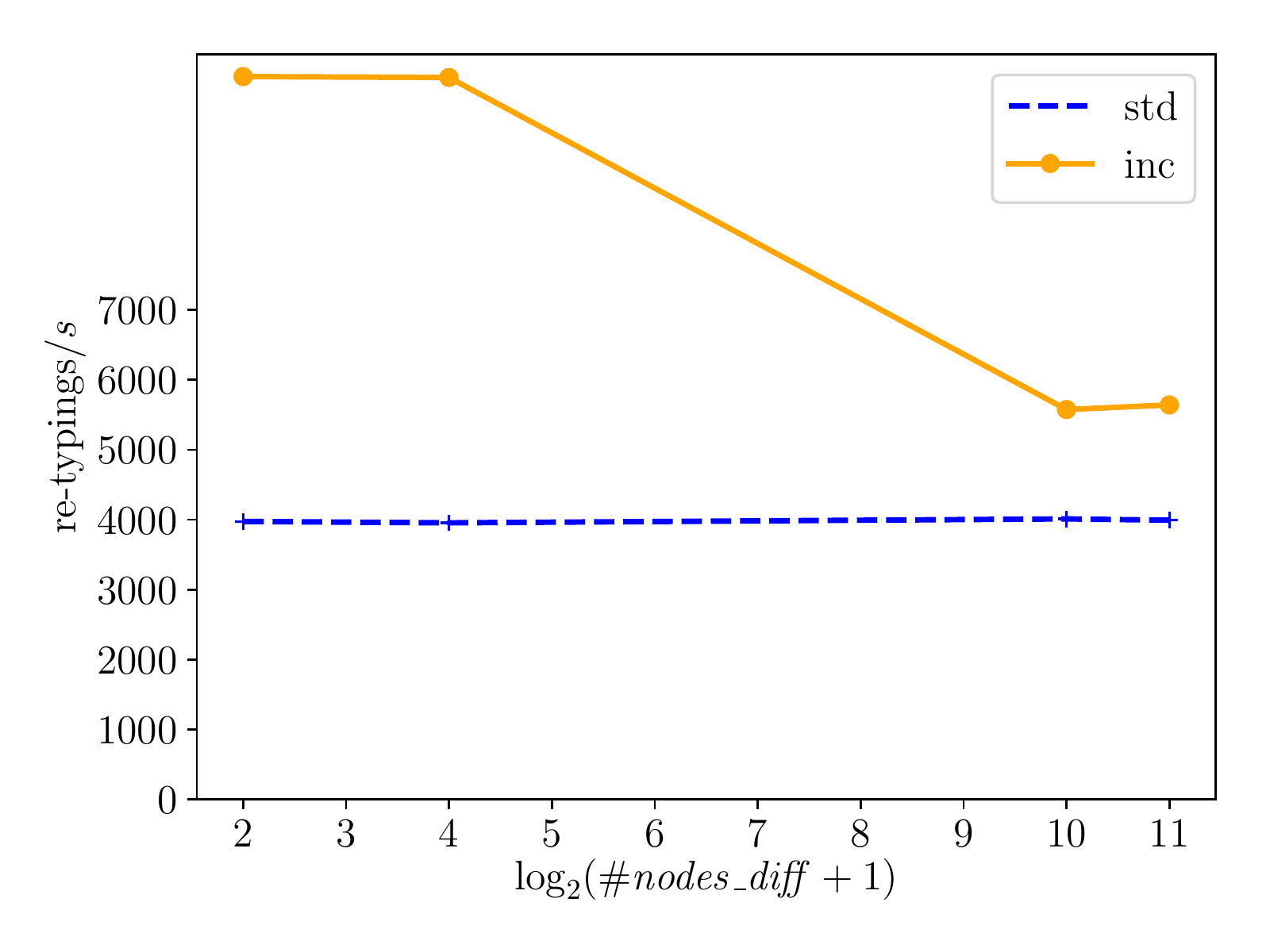}
      \caption{$\mathit{depth} = 12, \mathit{\#variables} = 2^{9}$}
      \label{fig:plot-orig-einc-12a}
  \end{subfigure}
  \hspace{1em}
  \begin{subfigure}[b]{0.47\textwidth}
    \includegraphics[width=\textwidth]{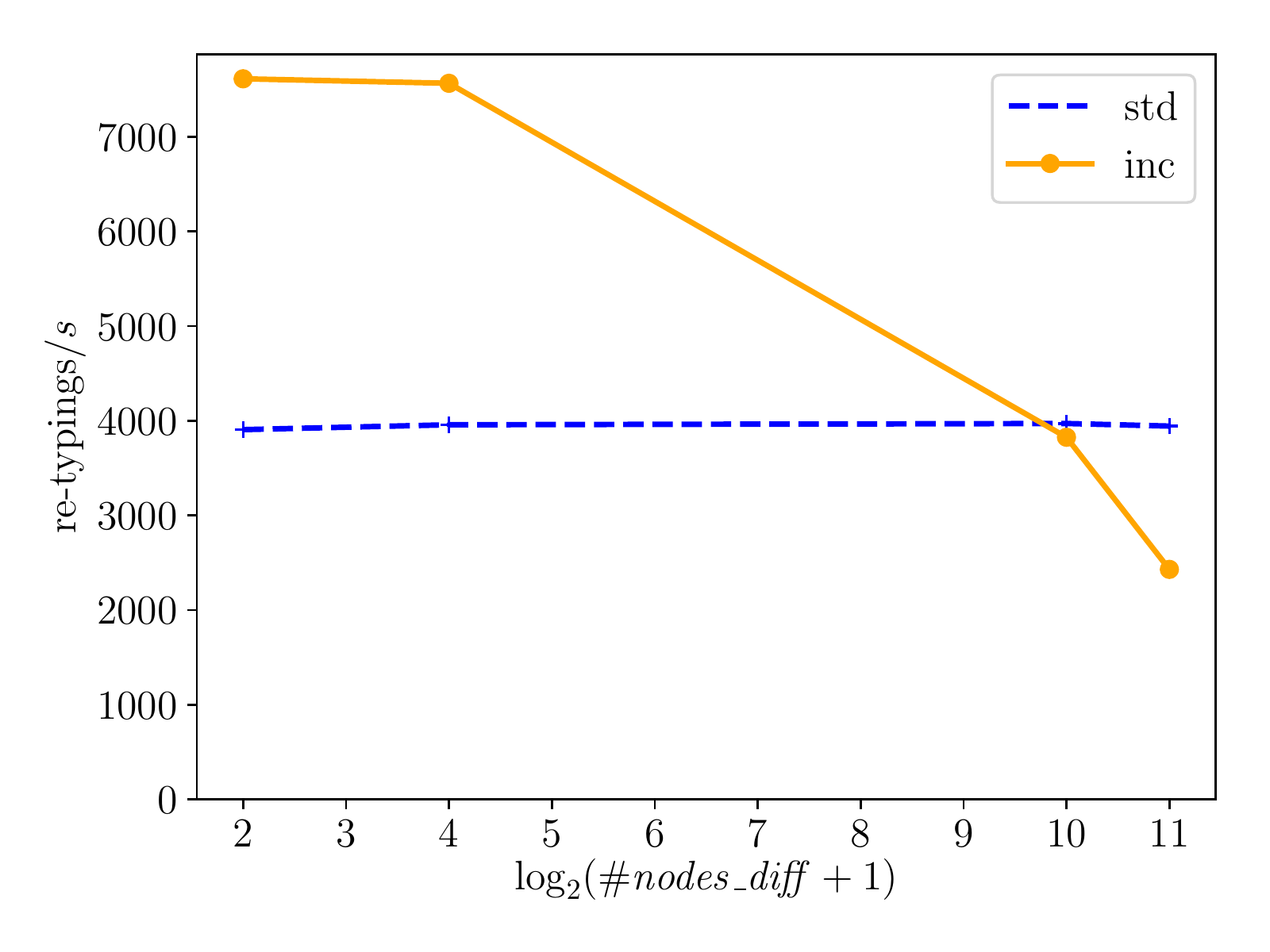}
    \caption{$\mathit{depth} = 12, \mathit{\#variables} = 2^{11}$}
    \label{fig:plot-orig-einc-12b}
  \end{subfigure}
    \begin{subfigure}[b]{0.47\textwidth}
    \includegraphics[width=\textwidth]{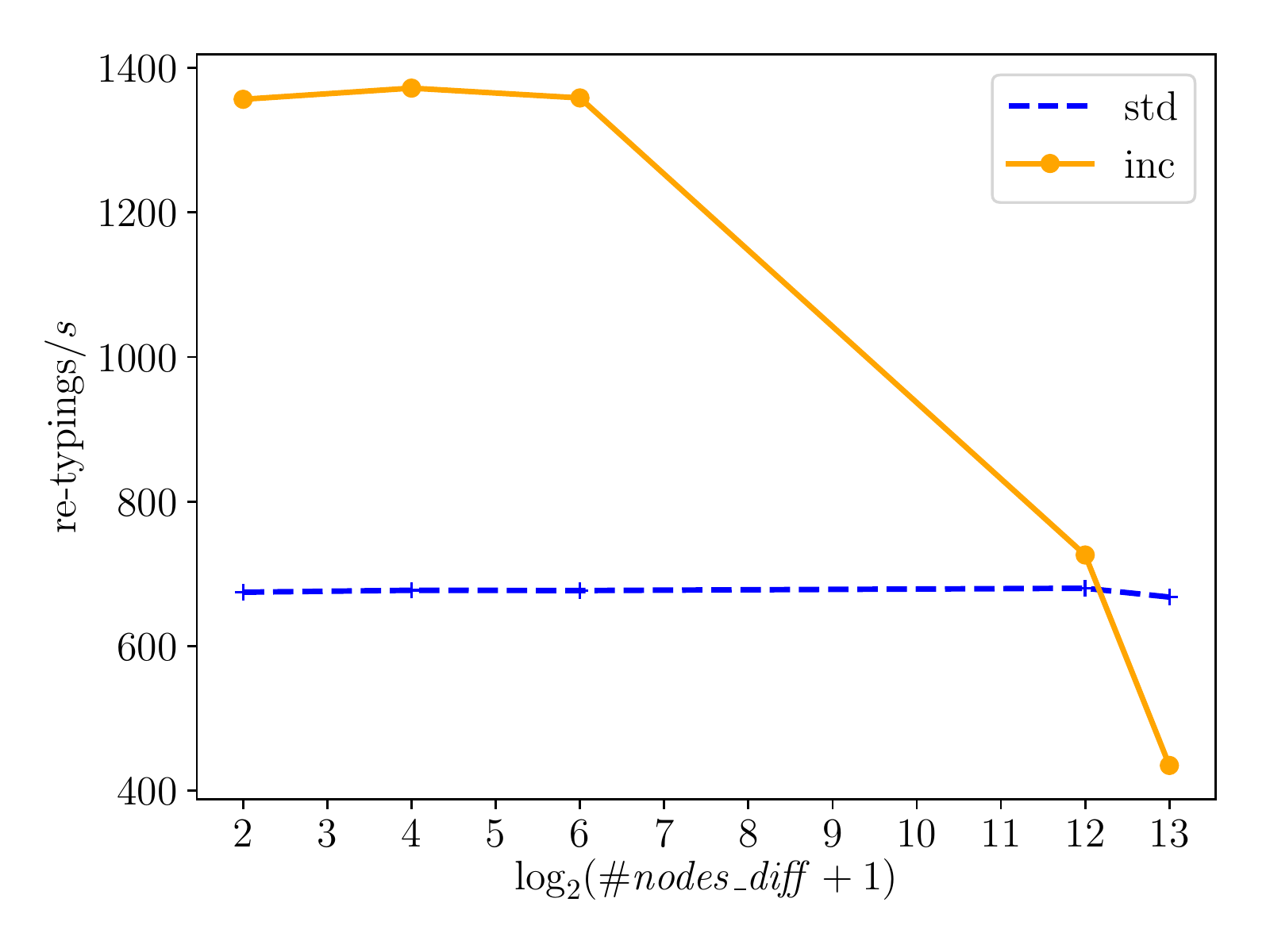}
    \caption{$\mathit{depth} = 14, \mathit{\#variables} = 2^{13}$}
    \label{fig:plot-orig-einc-14}
  \end{subfigure}
  \hspace{1em}
  \begin{subfigure}[b]{0.47\textwidth}
    \includegraphics[width=\textwidth]{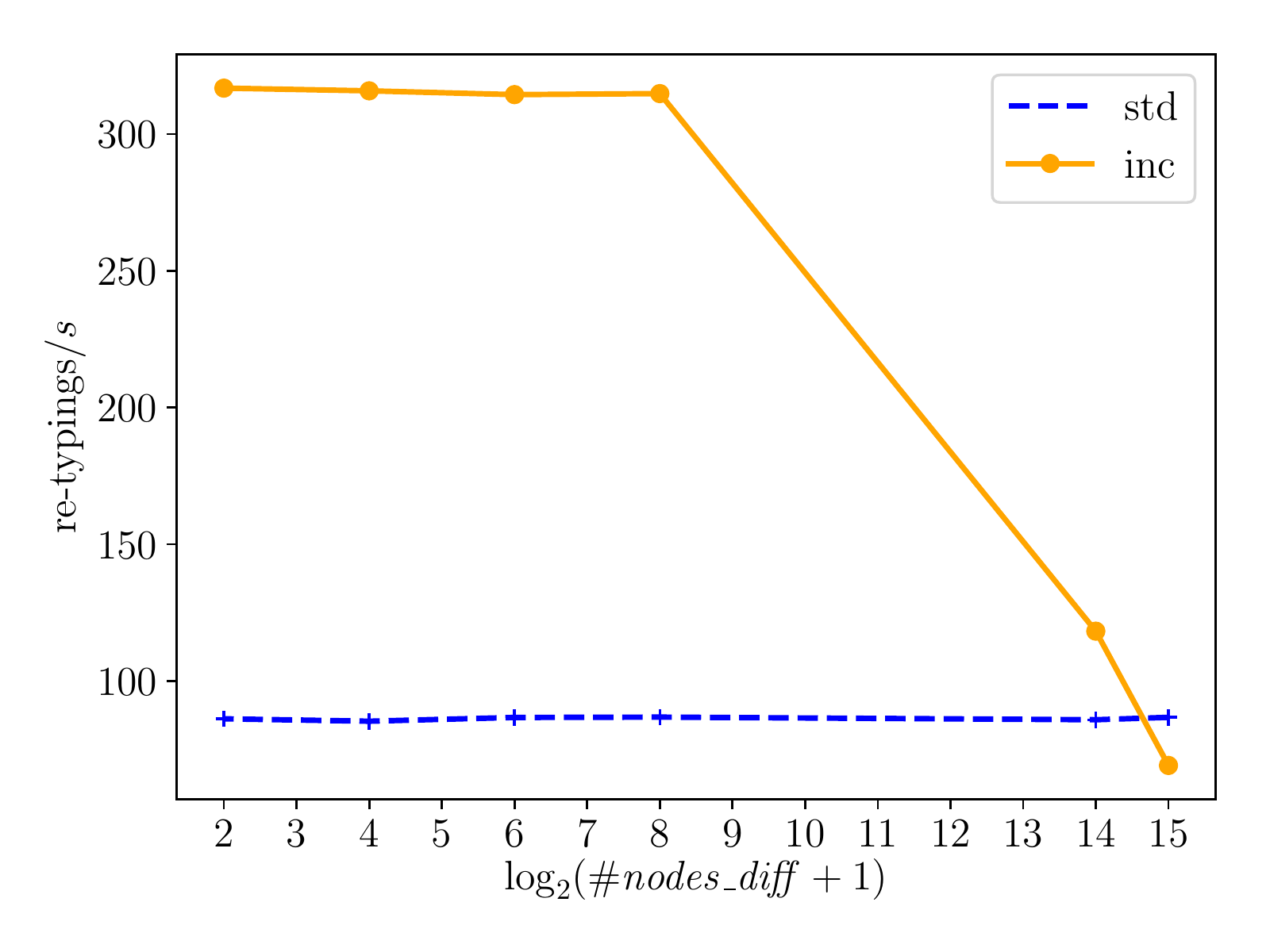}
    \caption{$\mathit{depth} = 16, \mathit{\#variables} = 2^{15}$}
    \label{fig:plot-orig-einc-16}
  \end{subfigure}
  \caption{Experimental results comparing the number of re-typings per second vs. the number of nodes of \emph{diff}.
  The blue, dashed plot is for the standard type checking, while the orange, solid one is for the incremental usage.
  The $x$-axis is logarithmic in all sub-tables, while the $y$-axis is scaled as necessary.}\label{fig:diff-cmp}
\end{figure}
The experimental results show that our caching and memoization is faster than re-typing twice.
An exception is when aAST have the maximum number of variables and the considered changes exceed 25\% of the nodes;
this is shown in the rightmost parts of \figurename~\ref{fig:diff-cmp}(b-d), while in part (a) the number of variables is not the maximum possible and incrementality is always beneficial.
%
%
All in all, the advantage of using incrementally a type checker decreases, as expected, when there is a significant growth of the number of variables or in the size of the program.
However, these cases only show up with very big numbers, which are not likely to occur often.


\section{Conclusions}\label{sec:concl}
%

We have presented an algorithmic schema for incrementally using existing type checking and type inference algorithms.
Since only the shape of the input, the output, and some domain-specific knowledge of the original algorithms are relevant, our schema considers them as grey-boxes.
Remarkably, the only real effort for defining the incremental algorithm is required for establishing the notion of compatibility between parts of the environments relevant for re-typing.
We have introduced the basic bricks of our approach and proved a theorem guaranteeing the coherence of \emph{any} original algorithm with its incremental version, and \emph{vice versa}.
As a matter of fact, coherence follows from easily checking a mild condition on the environment compatibility.
To illustrate the approach we have then instantiated our proposal to a functional language for type checking and type inference, and to an imperative language for checking non-interference.

We have implemented the incremental version of the type checker of MinCaml,
and we have assessed it on synthetic programs with varying size and number of variables.
The experiments have shown our proposal worth using within a continuous software development model where fast responsiveness
is needed.
Indeed, the diagrams in \figurename~\ref{fig:diff-cmp} show that only \emph{diffs} are typed, possibly with those parts of the code affected by them.
Additionally, the cost of using the type checker incrementally depends on the size of \emph{diffs}, and its performance increases as these become smaller, a typical situation when applying local transformations, e.g.\ code motion, dead code elimination, and code wrapping.
%

\paragraph{Future work.}
We are confident that little extensions to our proposal are needed to cover also type and effect systems.
Also, other programming paradigms should be easily accommodated in our incremental schema, as preliminary results on process calculi suggest us.
More work is instead required to apply our ideas to other syntax-directed static analyses, e.g. control flow analysis
because of fixed-point computations.
We also plan to carry our proposal on Abstract Interpretation, where the rich structure of the abstract domains poses some serious challenges.
Presently, we are extending our prototype with an incremental type inference for MinCaml.
Moreover, we plan to implement a generator that, given an existing type checking or inference algorithm $\mathcal A$ and the definition of $\ecompatenv$ automatically produces the corresponding incremental algorithm  (recall that $\tr$ and $\echeckjoin$ are directly inherited from $\mathcal A$).

More experiments on real programs are also in order to better assess the performance of our proposal, as well as its scalability.


\newpage

\bibliographystyle{splncs04}
\bibliography{biblio}
\FloatBarrier

\newpage
\appendix
\section{Proofs of Lemmata and Theorems}\label{app:proof}
%

\thmcachecorrectness*
\begin{proof}
    The theorem easily follows by induction from the definition of $\ebuildcache$.
\end{proof}

\thmtypecoherence*
\begin{proof}
The proof is divided in two parts, one for each side of the implication.

\paragraph{$(\implies)$}

We show by structural induction on terms that, under the hypothesis of the theorem, it is true that
\[
    \Gamma \vdash_\mathcal{A} t : R \implies \Gamma, C \vdash_\mathcal{IA} t : R \triangleright C'.
\]

\begin{description}
    \item[Base case]
    This case occurs when $t$ has no subterms, and it has two exhaustive sub-cases:
    \begin{enumerate}
        \item If $\mathit{miss}(C, t, \Gamma)$, then the premises for the miss rule for terms with no subterms are trivially satisfied and we can derive $\Gamma, C \vdash_\mathcal{IA} t : R \triangleright C'$.
        \item Otherwise, $C(t) = \langle \Gamma', R \rangle \,\land\, \compatenv{\Gamma}{\Gamma'}{t}$ holds, and $\Gamma, C \vdash_\mathcal{IA} t : R \triangleright C$ follows from the hit rule.
    \end{enumerate}

    \item[Inductive case.]
    Assume that for any subterm of $t$ the implication holds.    
    Again, we distinguish two cases:
    \begin{enumerate}
        \item If $\mathit{miss}(C, t, \Gamma)$, then since we know that:
        \begin{itemize}
            \item by the induction hypothesis for any $i \in \mathbb{I}_{t}$, it holds that $tr^{t}_{t_i} (\Gamma, \{ R_j \}_{j \leq i \land j \in \mathbb{I}_{t}}) \vdash_\mathcal{A} t_i : R_i \implies tr^{t}_{t_i} (\Gamma, \{ R_j \}_{j \leq i \land j \in \mathbb{I}_{t}}), C \vdash_\mathcal{IA} t_i : R_i \triangleright C^{i}$
            \item $\echeckjoin_{t} (\Gamma, \{ R_i \}_{i \in \mathbb{I}_{t}}, \mathtt{out} R)$ holds in the premise of the original rule, it holds in the premise of incremental rule too.
        \end{itemize}
        all the premises of the miss rule are satisfied, and $\Gamma, C \vdash_\mathcal{IA} t : R \triangleright C'$ holds. 
        \item Otherwise, apply the same argument of case 2. above. 
    \end{enumerate}
\end{description}

\paragraph{$(\impliedby)$}

Again, we use structural induction terms to show that, under the hypothesis of the theorem, the implication
\[
    \Gamma, C \vdash_\mathcal{IA} t : R \triangleright C' \implies \Gamma \vdash_\mathcal{A} t : R
\]
holds.

\begin{description}
    \item[Base case]
    This case occurs when $t$ has no subterms, and it has two exhaustive sub-cases:
    \begin{enumerate}
        \item If $\mathit{miss}(C, t, \Gamma)$, then $\Gamma \vdash_\mathcal{A} t : R$ is trivially true because it is a premise of the relevant miss rule.
        \item Otherwise $C(t) = \langle \Gamma', R \rangle \,\land\, \compatenv{\Gamma}{\Gamma'}{t}$ must hold, and we can deduce $\Gamma \vdash_\mathcal{A} t : R$, because $\ecompatenv$ expresses compatibility and Theorem~\ref{thm:cachecorrectness} holds.
    \end{enumerate}

    \item[Inductive case.] Just the symmetric of the other implication.
\end{description}
\end{proof}

\lemmaIFcoherence*
\begin{proof}
    Trivial since $\compatenv{\Gamma}{\Gamma'}{e}$ requires $\Gamma$ and $\Gamma'$ to coincide on the free variables of $e$ on which the typing of $e$ only depends.
\end{proof}

\thmIFcoherence*
\begin{proof}
    Immediate by Lemma~\ref{lemma:IF-coherence}.
\end{proof}

\lemmaIWcoherence*
\begin{proof}
Since $\mathcal W$ is syntax-driven, (both the tree and the rules in) a deduction, if any, only depends on $e$.
The implication in Definition~\ref{def:tec} holds because all the premises that use $\Gamma$ still hold when $\Gamma'$ is used instead, since $\Gamma(y)$ unifies with $\Gamma'(y)$ for all free variables $y$ (note that $\mathcal U$ is reflexive, symmetric and transitive).
\end{proof}

\thmIWcoherence*
\begin{proof}
    Immediate by Lemma~\ref{lemma:IW-coherence}.
\end{proof}

\lemmaIScoherence*
\begin{proof}
    Immediate
\end{proof}

\thmIScoherence*
\begin{proof}
    Immediate by Lemma~\ref{lemma:IS-coherence}.
\end{proof}

\newpage

\section*{Further experimental results}\label{sec:more-results}
%

\begin{figure}[!hbp]
  \centering
    \begin{subfigure}[b]{0.47\textwidth}
      \includegraphics[width=\textwidth]{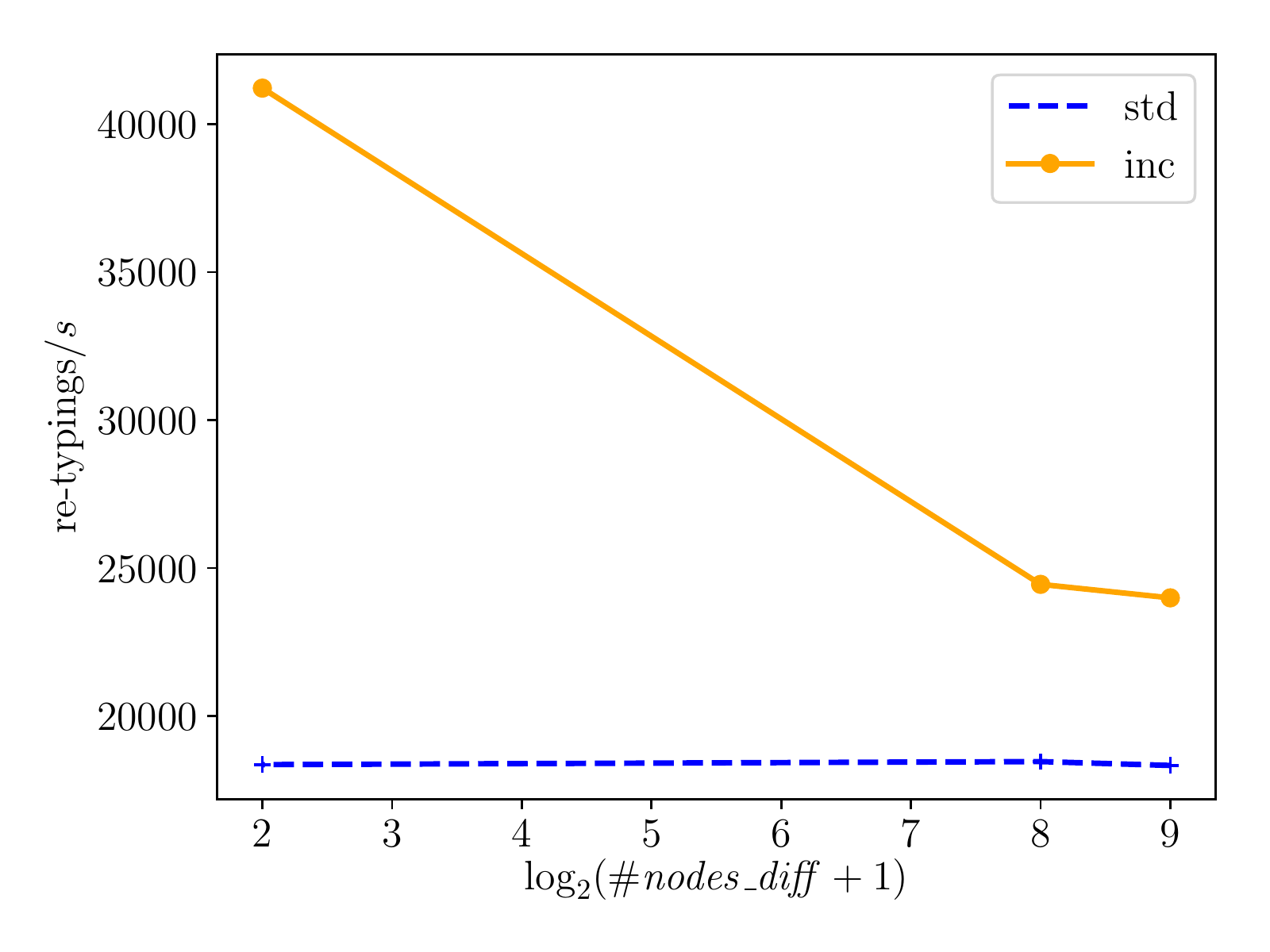}
      \caption{$\mathit{depth} = 10, \mathit{\#variables} = 2^7$}
  \end{subfigure}
  \hspace{1em}
  \begin{subfigure}[b]{0.47\textwidth}
    \includegraphics[width=\textwidth]{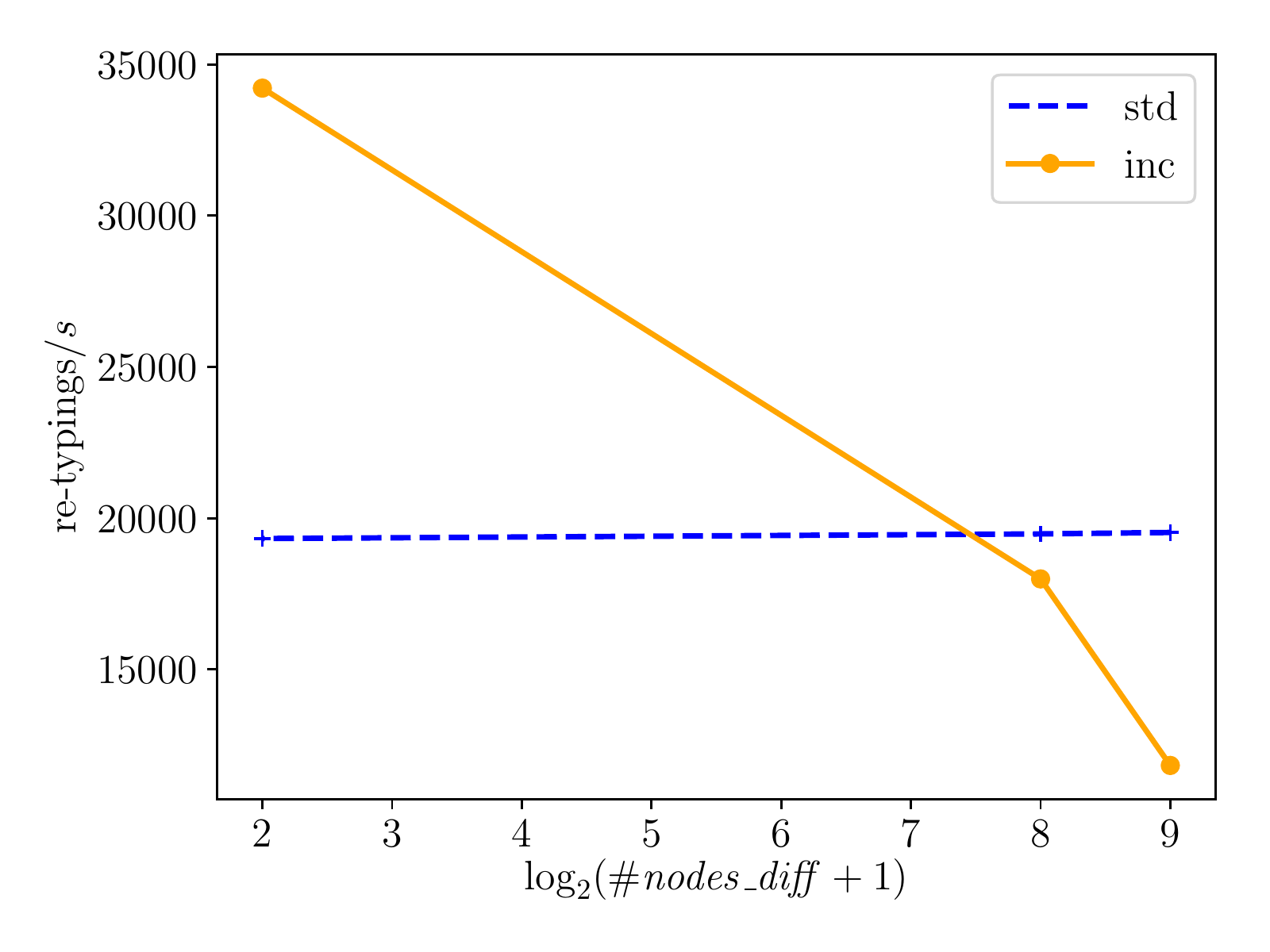}
    \caption{$\mathit{depth} = 10, \mathit{\#variables} = 2^9$}
  \end{subfigure}
  \begin{subfigure}[b]{0.47\textwidth}
      \includegraphics[width=\textwidth]{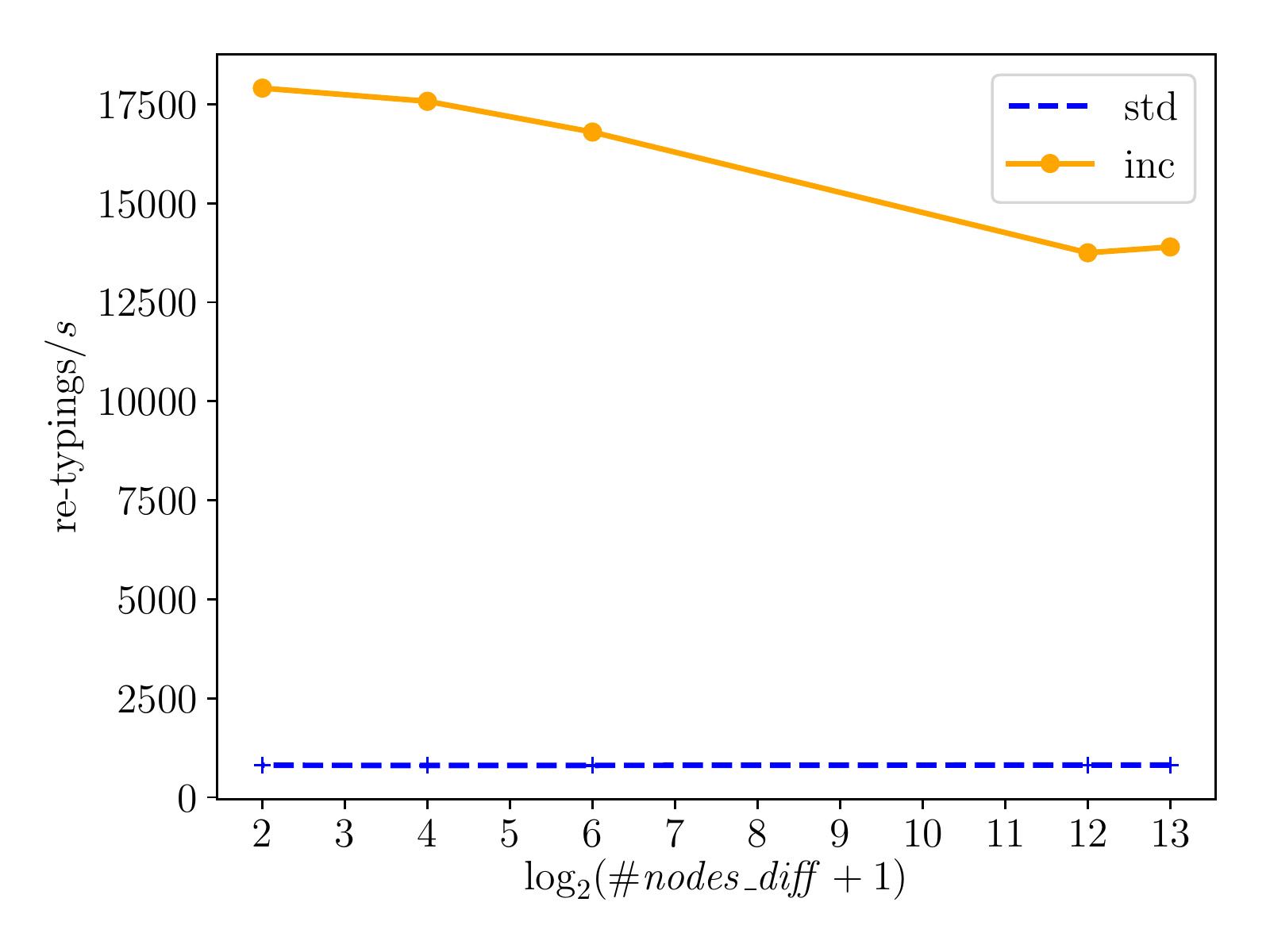}
      \caption{$\mathit{depth} = 14, \mathit{\#variables} = 2^7$}
  \end{subfigure}
  \hspace{1em}
  \begin{subfigure}[b]{0.47\textwidth}
    \includegraphics[width=\textwidth]{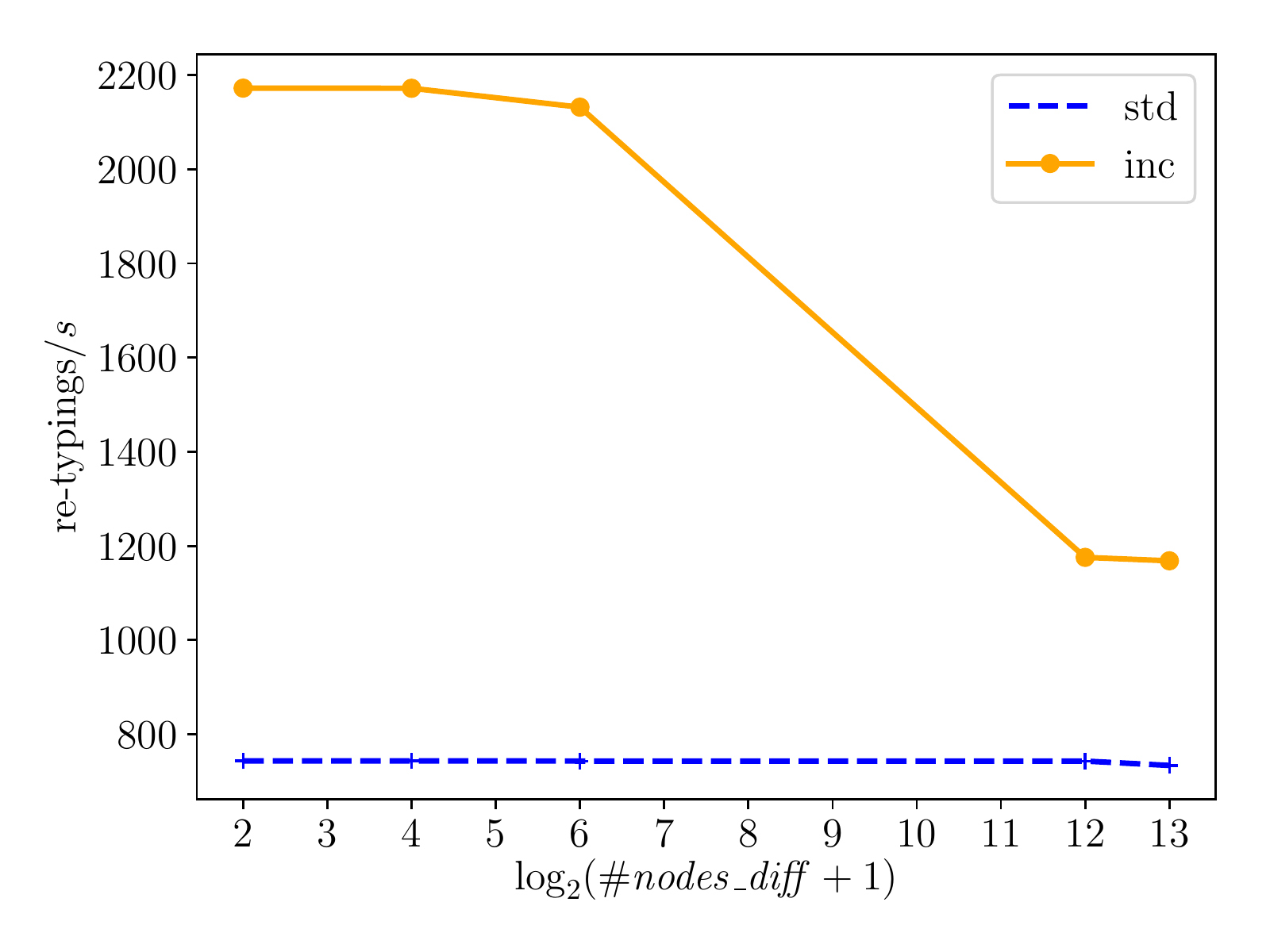}
    \caption{$\mathit{depth} = 14, \mathit{\#variables} = 2^{11}$}
  \end{subfigure}
    \begin{subfigure}[b]{0.47\textwidth}
    \includegraphics[width=\textwidth]{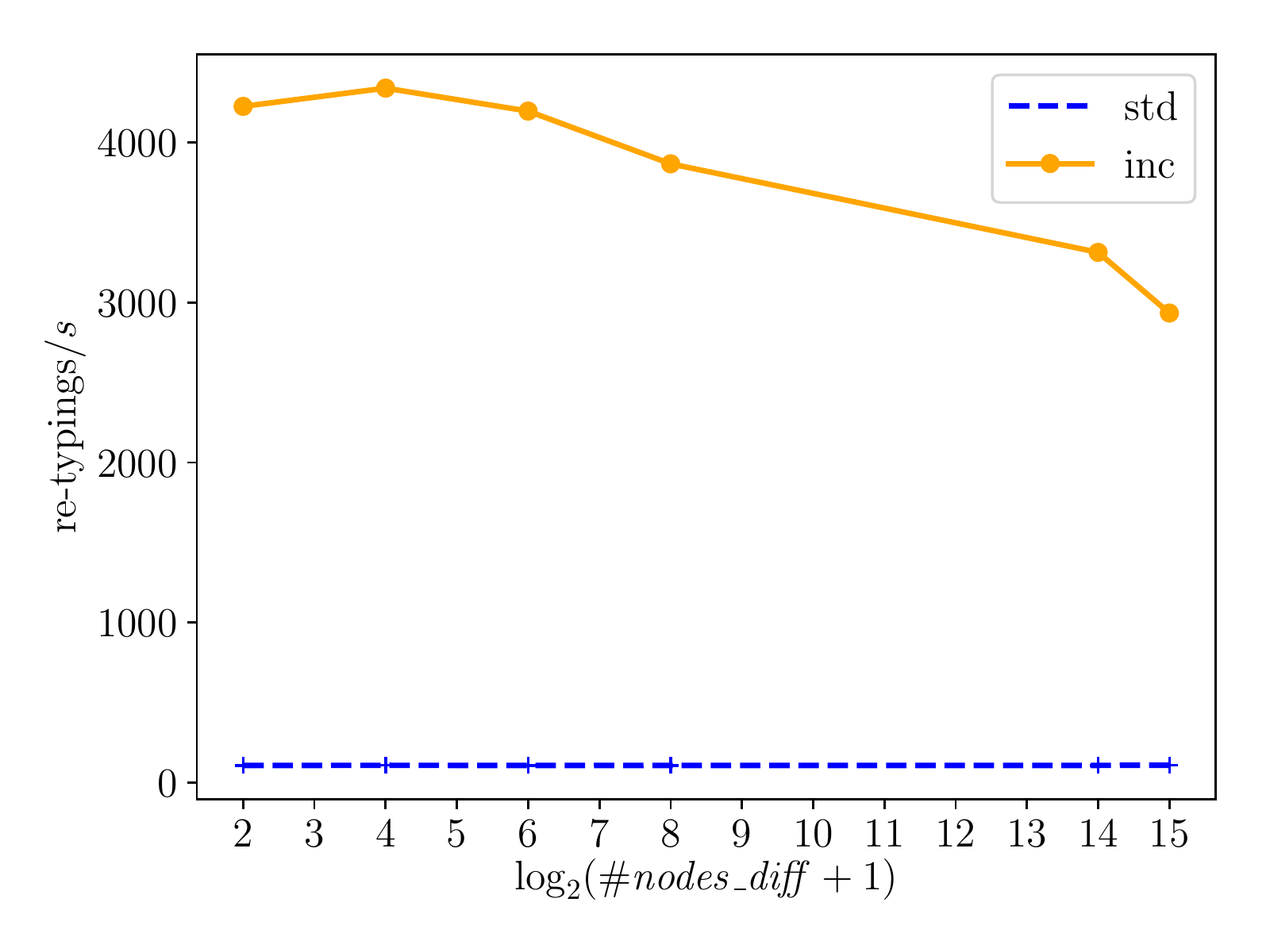}
    \caption{$\mathit{depth} = 16, \mathit{\#variables} = 2^{9}$}
  \end{subfigure}
  \hspace{1em}
  \begin{subfigure}[b]{0.47\textwidth}
    \includegraphics[width=\textwidth]{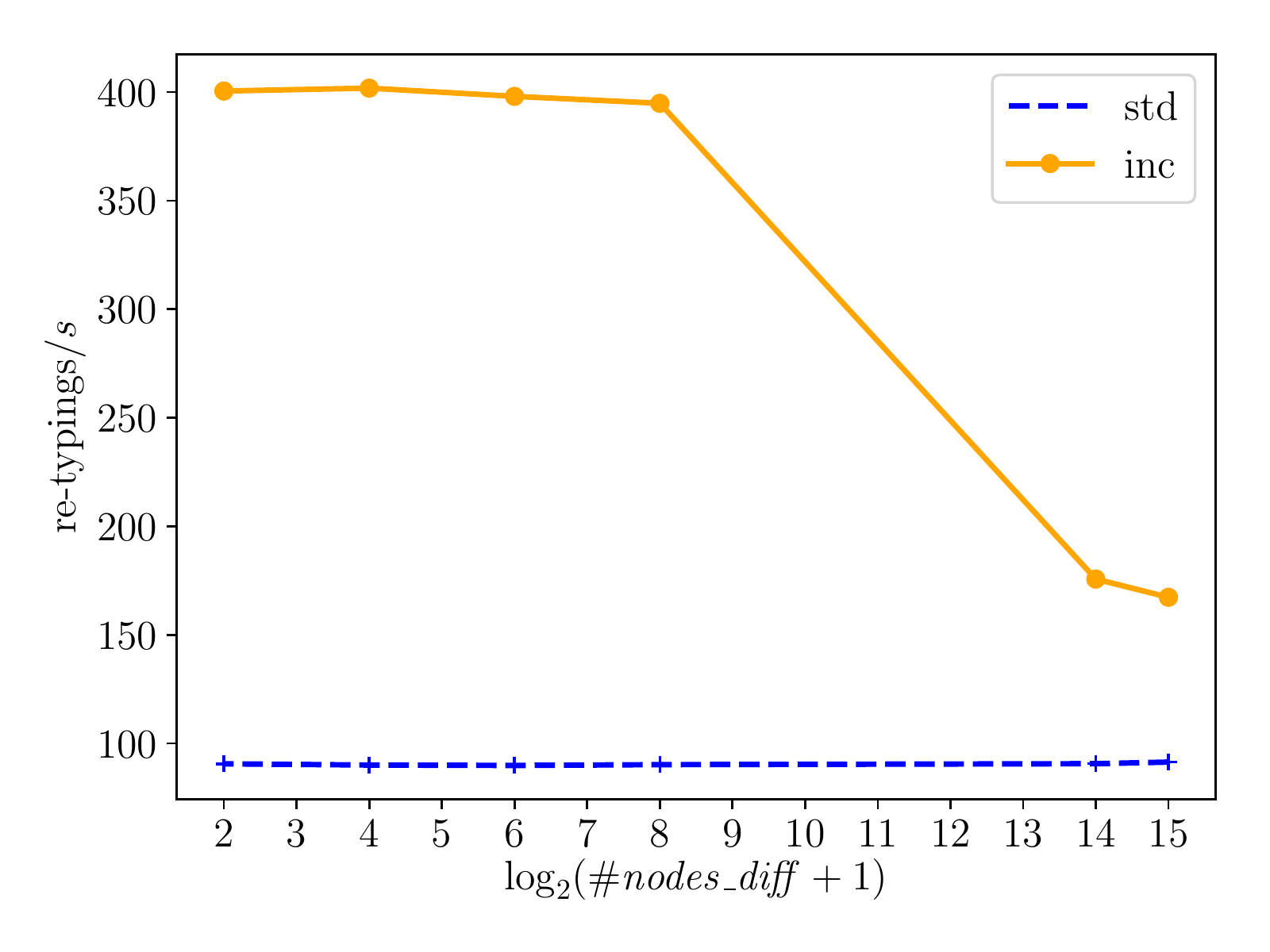}
    \caption{$\mathit{depth} = 16, \mathit{\#variables} = 2^{13}$}
  \end{subfigure}
  \caption{Some further experimental results comparing the number of re-typings per second vs. the number of nodes of \emph{diff}.
  The blue, dashed plot is for the standard type checking, while the orange, solid one is for the incremental usage.}\label{fig:more-diff-cmp}
\end{figure}

\end{document}